\renewcommand{\leq}{\leqslant}
\renewcommand{\geq}{\geqslant}
\newcommand{\defAs}{\coloneqq}
\newcommand{\qsp}{\quad}
\newcommand{\best}[1]{\textbf{#1}\xspace}
\newcommand{\newenc}{SFDC\xspace}
\newcommand{\newencLong}{Succinct Format with Direct aCcesibility}
\newcommand{\layer}[5]{\put(#1,#2){\framebox(#3,#4)[l]{#5}}}
\newcommand{\clayer}[6]{\put(#1,#2){\colorbox{#6}{\framebox(#3,#4)[l]{#5}}}}
\newcommand{\laylab}[3]{\put(#1,#2){\makebox(0,0)[l]{#3}}}
\newcommand{\cbox}[3]{\put(#1,#2){\colorbox{#3}{\framebox(4,4)[c]{}}}}
\newcommand{\underlayer}[6]{\put(#1,#2){$\underbrace{\framebox(#3,#4)[l]{#5}}_{#6}$}}
\newcommand{\virg}[1]{\texttt{#1}}
\newcommand{\ceil}[1]{\lceil #1 \rceil}
\newcommand{\COMMENT}[1]{}
\title{The Many Qualities of a New\\Directly Accessible Compression Scheme}
\author{Domenico Cantone and Simone Faro}
\institute{Department of Mathematics and Computer Science, University of Catania\\
Viale A.Doria n.6, 95125, Catania, Italy\\
\email{\{domenico.cantone, simone.faro\}@unict.it}}
\begin{document}


\maketitle

\begin{abstract}
\vspace{-0.5cm}
We present a new variable-length computation-friendly encoding scheme, named \newenc (\newencLong), that supports direct and fast accessibility to any element of the compressed sequence and achieves compression ratios often higher than those offered by other solutions in the literature.
The \newenc scheme provides a flexible and simple representation geared towards either practical efficiency  or compression ratios, as required.
For a text of length $n$ over an alphabet of size $\sigma$ and a fixed parameter $\lambda$, the access time of the proposed encoding is proportional to the length of the character's code-word, plus an expected $\mathcal{O}((F_{\sigma - \lambda + 3} - 3)/F_{\sigma+1})$ overhead, where $F_j$ is the $j$-th number of the Fibonacci sequence. In the overall it uses  $N+\mathcal{O}\big(n \left(\lambda - (F_{\sigma+3}-3)/F_{\sigma+1}\big) \right) = N + \mathcal{O}(n)$ bits, where $N$ is the length of the encoded string.
Experimental results show that the performance of our scheme is, in some respects, comparable with the performance of DACs and Wavelet Tees, which are among of the most efficient schemes. 
%
In addition our scheme is configured as a \emph{computation-friendly compression} scheme, as it counts several features that make it very effective in text processing tasks. In the string matching problem, that we take as a case study, we experimentally prove that the new scheme enables results that are up to 29 times faster than standard string-matching techniques on plain texts.
%
\end{abstract}

\newcommand{\pBlock}[1]{p\llbracket #1 \rrbracket}

\section{Introduction}
The problem of text compression involves modifying the representation of any given text in plain format (referred to as \emph{plain text}) so that the output format requires less space (as little as possible) for its storage and, consequently, less time for its transmission. 
The compression schemes must guarantee that the plain text can be reconstructed exactly from its compressed representation. 
%

Compressed representations allow one also to speed up algorithmic computations, as they can make better use of the memory hierarchy available in modern PCs, reducing disk access time. In addition, compression finds application in the fields of compressed data structures \cite{MT02,Salomon07}, which allow the manipulation of data directly in their encoded form, finding heavy use in bioinformatics, database systems, and search engines. In these contexts, compression schemes that allow direct and fast access to the elements of the encoded sequence are of fundamental importance.

We tacitly assume that, in an uncompressed text $y$ of length $n$ over an alphabet $\Sigma$ of size $\sigma$, every symbol is represented by $\lceil \log \sigma \rceil$ bits, for a total of $n\lceil \log \sigma \rceil$ bits.\footnote{Throughout the paper, all logarithms are intended in base 2, unless otherwise stated.} On the other hand, using a more efficient variable-length encoding, such as the Huffman's optimal compression scheme \cite{huf52}, each character $c \in \Sigma$ can be represented with a code $\rho(c)$, whose (variable) length depends on the absolute frequency $f(c)$ of $c$ in the text $y$, allowing the text to be represented with $N=\sum_{c \in \Sigma}f(c)|\rho(c)| \leq n\lceil \log \sigma \rceil$ bits. The other side of the coin is that variable-length codes suffer from a significant problem, namely the impossibility of directly accessing the $i$-th code-word in the encoded string, for any $0 \leq i < n$, since its position in the encoded text depends on the sum of the lengths of the encodings of the characters that precede it. 
This is why, over the years, various encoding schemes have been presented that are able to complement variable-length codes with direct access to the characters of the encoded sequence.
Dense Sampling \cite{FV07}, Elias-Fano codes \cite{Elias74}, Interpolative coding \cite{MoffatS00,Teuhola11}, Wevelet Trees \cite{GGV03}, and DACs \cite{BLN09a,BLN13} are just some of the best known and most efficient solutions to the problem.  However, the possibility of directly accessing the encoding of any character of the text comes at a cost, in terms of additional space used for the encoding, ranging from $\mathcal{O}(n \log N)$ for Dense Sampling to $\mathcal{O}(N)$ for the case of Wavelet Trees.

\subsection{Our Results}
In this paper, we present a new variable-length encoding format for integer sequences that support direct and fast access to any element of the compressed sequence.
The proposed format, dubbed \newenc (Succinct Format with Direct aCcesibility), is based on variable-length codes obtained from existing compression methods. For presentation purposes, in this paper we show how to construct our \newenc from Huffman codes. We also prove experimentally that the proposed new format is efficient and compares very well against other coding schemes that support direct accessibility. 

The proposed encoding is based on a very simple format that can be outlined in brief as follows. For a text $y$ of length $n$ and a fixed parameter $\lambda$, the \newenc maintains $\lambda$ bitstreams, called \emph{layers}. The last layer is special and is called \emph{dynamic layer}.  The encodings of the characters in the text are not concatenated sequentially, rather the encoding of the character $y[i]$ (where $0 \leq i < n$) is spread across the $i$-th positions of the first $\lambda-1$ layers. If the encoding of $y[i]$ has more than $\lambda-1$ bits, the exceeding bits are considered as pending. All pending bits are arranged in the dynamic layer according to a first-in first-out strategy.
That's all.

Despite its apparent simplicity, which can be regarded as a value in itself, many interesting (and surprising) qualities emerge from the proposed encoding scheme. To the extent that we show in this paper, the \newenc encoding is relevant for the following reasons:
\begin{itemize}
    \item it allows direct access to text characters in (expected) constant time, through a conceptually simple and easy to implement model;
    \item it achieves compression ratios that, under suitable conditions, are superior to those offered by other solutions in the literature;
    \item it offers a flexible representation that can be adapted to the type of application at hand, where one can prefer, according to her needs, aspects relating to efficiency versus those relating to space consumption;
    \item it is designed to naturally allow parallel accessing of multiple data, parallel-computation, and adaptive-access on textual data, allowing its direct application to various text processing tasks with the capability of improving performance up to a factor equal to the word length.
\end{itemize}

Unlike other direct-access coding schemes, \newenc is based on a model that offers a constant access time in the expected case, when character frequencies have a low variability along the text.
Specifically, we prove that our compression scheme uses, in the overall, a number of bits equal to $N+\mathcal{O}\big(n \left(\lambda - (F_{\sigma+3}-3)/F_{\sigma+1}\big) \right) = N + \mathcal{O}(n)$, where $F_j$ is the $j$-th number of the Fibonacci sequence and $N$ is the length of the encoded string.
In addition our encoding allows accessing each character of the text in time proportional to the length of its encoding, plus an expected $\mathcal{O}((F_{\sigma - \lambda + 3} - 3)/F_{\sigma+1})$ time overhead.

We also suggest some modifications to further reduce the space used by giving up some features of the encoding, proving that these reductions are significant.

From an experimental point of view, we show that our scheme is particularly efficient in practical cases and is, in some respects, comparable with the performance of DACs \cite{BLN09a,BLN13} and Wavelet Trees \cite{GGV03}, which are among the most efficient schemes in the literature.

Finally, we show experimentally how \newenc can be effectively used in text processing applications. Through the adaptation of a simple string-matching algorithm to the case of encoded strings matching, we show how the new scheme enables results that are up to 29 times faster than standard string-matching techniques on plain texts.

\smallskip

The paper is organized as follows. 
In Section \ref{sec:notions}, we briefly review the terminology and notation used throughout the paper. Then, in Section \ref{sec:related-results}, we describe the Huffman encoding and
present some results related to the main direct-access representations based on variable-length encoding.
In Section \ref{sec:delta-fsdc}, we introduce in details our \newenc representation, presenting its encoding and decoding procedures and discussing its potential practical applications to text processing problems. Subsequently, in Section \ref{sec:delta-analysis}, we estimate the performance of \newenc in terms of decoding delay.
In Section \ref{sec:gamma-variant}, a more succinct variant, named $\gamma$-\newenc, is presented and its encoding and decoding procedures are discussed, along with a brief performance analysis.
Finally, in Section \ref{sec:experimental-real-data}, we present some experimental results on real data in order to compare \newenc with DACs, one of the most recent and effective representations based on variable-length codes that allows for direct access.
Conclusions and possible future developments are discussed in Section \ref{sec:conclusions}.

\section{Notation and Terminology}\label{sec:notions}
In this section we briefly review the terminology and notation used throughout the paper. 
A string $y$ of length $n \geq 0$ over a finite alphabet $\Sigma$ is represented as an array $y[0\,..\,n-1]$ of characters from $\Sigma$.  In particular, when $n = 0$, we have the empty string $\varepsilon$. For $0\leq i \leq j < n$, we denote by $y[i]$ and $y[i\,..\,j]$ the $i$-th character of $y$ and the substring of $y$ contained between the $i$-th and the $j$-th characters of $y$, respectively.  
For every character $c \in \Sigma$, we denote by $f(c)$ the number of occurrences of $c$ in $y$ and by $f^r(c)$ the value $f(c)/n$, where $n$ is the length of $y$,
and call the maps $f \colon \Sigma \rightarrow \{0,1,\ldots,n-1\}$ and $f^r \colon \Sigma \rightarrow [0,1]$ the \emph{absolute} and the \emph{relative frequency maps} of $y$, respectively.



A string $y$ over the binary alphabet $\Sigma=\{0,1\}$, represented as a bit vector $y[0\,..\,n-1]$, is a \emph{binary string}. Bit vectors are usually structured in sequences of \emph{blocks} of $q$ bits, typically bytes ($q=8$), half-words ($q=16$), or words ($q=32$). A block of bits can be processed at the cost of a single operation. 
If $p$ is a binary string of length $n$ we use the symbol $P[i]$ to indicate the $i$-th block of $p$.
For a block $B$ of $q$ bits, we denote by $B[j]$ (or $B_j$) the $j$-th bit of $B$, where $0\leq j< q$. 
Thus, for $i=0,\ldots,n-1$ we have $p[i]=P[\lfloor i/q \rfloor][i\mod q]$.
%
%

The word size of the main memory in the target machine is denoted by $w$. We assume that the value $w$ is fixed, so that all references to a string will only be to entire blocks of $w$ bits. 


\COMMENT{\color{red} [\textbf{si potrebbe eliminare?}]
In a \emph{fixed-length code} (or block-code) representation, the characters of the finite alphabet $\Sigma$ are encoded into a fixed number $k$ of bits. 
Specifically, in standard text representation, a fixed-length code is used with $k=w$, provided that $|\Sigma| \leq 2^{w}$, so that a string $y$ of length $n$ is coded as an array $S_y$ of $n$ blocks, each of size $w$, that can be read and written at arbitrary positions in constant time, and where the $i$-th block of $S_y$ contains the binary representation $\rho(y[i])$ of the $i$-th character of $y$, i.e.,  $S_y[i] = \rho(y[i])$. This approach turns out to be very simple and fast, as read and write operations of a given character can be done in constant time, by a direct access to the position of the array where the character is stored or needs to be written.
However, such a simple representation may waste a lot of space, since $wn$ bits are used, where just $\ceil{\log |\Sigma|}\cdot n$ would suffice.

In a more succinct representation of the string, contiguous blocks of size $\ell \leq w$ are used, each of which contains the fixed-length encoding of each character. The price to be paid for such a  representation relates to the time required to read each individual character in the text, which may also straddle two adjacent words.
}

\COMMENT{Any array $A$ of $n$ blocks of size $w$ can be regarded as a virtual bit array $\hat{A}$ of $nw$ bits, where each bit can be processed at the cost of a single operation. Conversely, any bit string $\hat{B}$ of length $m$ could be seen as an array $B$ of $\ceil{m/w}$ blocks. Thus we have that $\hat{B}[i]$ is the $j$-th bit of $B[\lfloor i/w \rfloor]$, where $j \coloneqq i\mod w$.
Since the length $m$ of a binary string needs not necessarily be a multiple  of $w$, the last block may be only partially defined.}

Finally, we recall the notation of some bitwise infix operators on computer words, namely the bitwise \texttt{and} ``$\&$'', the bitwise \texttt{or} ``$|$'', and the \texttt{left shift} ``$\ll$'' and \texttt{right shift} ``$\gg$'' operators, which shift to the left or to the right, respectively, their first argument by a number of bits equal to their second argument.

\section{Huffman Variable-Length Encoding and Direct Accessibility}\label{sec:related-results}


The best known method to compress a string $y$ of length $n$ is the \emph{statistical encoding}. Assume that $\Sigma = \{c_0, c_1, \ldots, c_{\sigma-1}\}$ is the set of all the distinct characters appearing in $y$, ordered by their frequencies in $y$, so that each character $c_i$ occurs at most as frequently in $y$ than every other character $c_j$ such that $0 \leq i<j < \sigma$. Statistical encoding consists in mapping any character $c \in \Sigma$ to a variable-length binary string $\rho(c)$, assigning shorter encodings to more frequent characters and, consequently, longer encodings to less frequent characters, in order to minimize the size of the endoded text. Huffman coding \cite{huf52} is the best (and best known) method in the statistical encoding approach, achieving the minimum encoding length, in number of bits, yet allowing unique decoding. 


There are several examples of variable-length coding, and to list them all is beyond the scope of this paper. Instead, we take Huffman's encoding as a reference point for the design of our succinct representation. As with other representations, the idea behind our proposal is flexible enough to be adapted to other variable-length encodings.

The Huffman algorithm computes an optimal \emph{prefix
code}, relative to given frequencies of the alphabet characters, where we recall that a prefix code is a set of (binary) words containing no word that is a
prefix of any other word in the set.  Thanks to such a property,
decoding is particularly simple.  Indeed, a binary prefix code can be
represented by an ordered binary tree, whose leaves are labeled with
the alphabet characters and whose edges are labeled by $0$ (left
edges) and $1$ (right edges) in such a way that the code-word of an
alphabet character $c$ is the word labeling the branch from the root
to the leaf labeled by the same character $c$.

Prefix code trees, as computed by the Huffman algorithm, are called
\emph{Huffman trees}.  These are not unique, by any means.  The
usually preferred tree for a given set of frequencies, out of the
various possible Huffman trees, is the one induced by \emph{canonical
Huffman codes}~\cite{SK64}.  Such tree has the property that, 
the sequence of the leaves depths, scanned from left to right,  is non-decreasing.


The main problem with variable-length encodings lies in the impossibility of directly accessing the $i$-th element of the encoded string, since its position in the encoded sequence depends on the sum of the lengths of the encodings of the characters that precede it. This is not a problem in applications where data must be decoded from the beginning of the string. For example, in the implementation of the well-known Knuth Morris and Pratt algorithm \cite{KMP77} for exact string matching, the algorithm scans the entire text, proceeding from left to right. This would allow for an efficient sequential decoding of the characters in the text. However, wishing to remain in the same problem, the direct translation of the equally well-known Boyer-Moore algorithm \cite{BM77} would not be possible, since it would require to access positions in the text before accessing the ones that precede it.

The typical solution to provide direct access to a variable length encoded sequence is \emph{sparse sampling}, which consists in maintaining the initial position in the encoded string of each of its $h$-th elements, for a fixed integer parameter $h>1$. This results in a space overhead of $\lceil n/h\rceil \lceil \log N\rceil$ bits and a time overhead of $\mathcal{O}(h \rho_{max})$ to directly access any element, where $\rho_{max}$ is the maximum length of any code-word generated by the Huffman algorithm and $N$ is the length of the encoded text.
An improvement in this direction is represented by \emph{dense sampling} \cite{FV07}, which codes the $i$-th character of the text using only $\log y[i]$ bits and two levels of pointers to characters in the text, allowing to directly access any character at the cost of a $\mathcal{O}(n(\log \log N + \log \log \sigma))$ extra space.

\COMMENT{
Another solution allowing code-word realignment after a random access to a text position
would be to use codes in which no code-word is either a prefix or a suffix of any other code-word.  However, such codes, called \emph{affix} or \emph{fix-free}, are extremely infrequent~\cite{FK90}.
Alternatively, Klein and Shapira~\cite{KS05} showed that, for long enough patterns, the probability of finding false matches is often very low, independently of the algorithm.  They then proposed a probabilistic searching algorithm which works on the assumption that Huffman codes tend to realign quickly after an error.

Beyond these results, the impossibility of allowing direct access to the characters of the compressed sequence remains an inherent limitation of statistical coding that greatly restricts its application in many practical cases, despite its considerable advantages.
}



\begin{table}[!t]
    \centering
    \begin{tabular}{|l|l|l|l|}
        \hline
        &&&\\[-0.3cm]
         ~Method & ~Reference & ~Overall Space & ~Access to $y[i]$~\\
        &&&\\[-0.3cm]
        \hline
        &&&\\[-0.3cm]
        ~Sparse Sampling & ~ - & ~$N + \lceil n/h\rceil \lceil \log(N)\rceil$& ~$\mathcal{O}(h\rho_{max})$~\\
        &&&\\[-0.3cm]
        ~Dense Sampling & ~\cite{FV07} & ~$N + n(\log \log N + \log \log \sigma)$ & ~$\mathcal{O}(|\rho(y[i])|)$~\\
        &&&\\[-0.3cm]
        ~Interpolative Coding~  & ~\cite{MoffatS00,Teuhola11} & ~$N + \mathcal{O}(n \log(N)/ \log(n))$ & ~$\mathcal{O}(\log n)$~\\
        &&&\\[-0.3cm]
        ~Wavelet Tree & ~\cite{GGV03} & ~$N + o(N)$ & ~$\mathcal{O}(|\rho(y[i])|)$~\\
        &&&\\[-0.3cm]
        ~DACs ~& ~\cite{BLN09a,BLN13} & ~$\mathcal{O}((N\log \log N)/(\sqrt{N_0/n} \log N) + \log \sigma)~~ $~& ~$\mathcal{O}( N/(n(\sqrt{N_0/n})))$~~\\
        &&&\\[-0.3cm]
        ~\newenc ~& ~This paper~ & ~$N+\mathcal{O}(n)$ & ~$\mathcal{O}(|\rho(y[i])|)$ Expected~~\\
        \hline
    \end{tabular}\\[0.4cm]
    \caption{Some of the main compression schemes based on variable-length codes that allow for direct access to characters in the encoded sequence}
    \label{tab:scheme-list}
\end{table}

Another solution is \emph{Interpolative Coding} \cite{MoffatS00,Teuhola11},  
designed to allow fast access both to the symbol having a given position and to the symbol for which the sum of all symbols preceding it exceeds a certain threshold. Both operations are supported in $\mathcal{O}(\log n)$ time by means of a balanced virtual tree on the sequence of the encoded values, where the encoding of a subtree is preceded by the sum of the values and the encoding size of the left subtree. The space overhead is only $\mathcal{O}(n \log N/ \log n)$ bits, offering a good combination of space and time efficiency for both operations, but does not perform well from a practical point of view when compared to the other solutions. 

An additional elegant solution is provided by \emph{Wavelet Trees} \cite{GGV03}, by Grossi \emph{et al.}, a data structure that can be used for the representation of a sequence of $n$ symbols encoded with variable-length encoding, allowing for direct access in time proportional to the length of the encoding. The root of the wavelet tree is a bitmap containing the first bits of all code-words $\rho(y_i)$, for $0\leq i<n$. If there are some bits equal to $0$ in this bitmap, then the left child of the node contains the next bit of the code-words that begin with a bit set to $0$. Similarly, if there are some bits set to $1$ in the bitmap, then the right child of the node contains the second bit of those code-words that begin with a bit set to $1$. The operation continues recursively on both children. The total number of bits in the wavelet tree is exactly $O(N)$ but the decoding operation makes use of a rank function that can be computed in constant time using $o(N)$ additional bits.

Finally, we mention the \emph{Directly Addressable Codes} (DACs) \cite{BLN09a,BLN13} a scheme that makes use of the generalized Vbyte coding \cite{Williams1999}. Given a fixed integer parameter $b>1$, the symbols are first encoded into a sequence of $(b + 1)$-bit chunks, which are then arranged in $\lceil \log(\sigma)/b \rceil$ bitstreams, were the i-th bitstream contains the $i$-th least significant chunk of every code-word.\footnote{Using Vbyte coding, the best choice for the space upper bound is $b = \sqrt{N_0/n}$, achieving $N\leq N0 + 2n \sqrt{N_0/n}$ , which is still  worse than other variable length encoding schemes but allows for very fast decoding, making DACs particularly efficient solutions in practice.}
Each bitstream is separated into two parts. 
The lowest $b$ bits of the chunks are stored contiguously in an array $A$, whereas the highest bits are concatenated into a bitmap $B$ that indicates whether there is a chunk of that code-word in the next bitstream. To find the position of the corresponding chunk in the next bitstream, a rank query data structure is needed on the bitmap $B$.
The overall space is essentially that of the Vbyte representation of the sequence, plus (significantly) lower-order terms, and specifically $\mathcal{O}(\log \sigma)$ space for the pointers and $\mathcal{O}((N\log \log N)/(\sqrt{N_0/n} \log N))$ for the rank data structure, where $N_0 \defAs \sum_{i=0}^{n-1} (\lfloor \log y[i] \rfloor+1)$
is the total length of the representation if we could assign just the minimum number of bits required to represent each symbol of the text.

Table \ref{tab:scheme-list} summarises the main direct access compression schemes based on variable length codes, indicating the number of bits required for the encoding and the access time to a character of the text.



\section{Succinct Format with Direct Accessibility}\label{sec:delta-fsdc} 

In this section, we present in detail our proposed \newenc representation and its related encoding and decoding procedures.

Again, let $y$ be a text of length $n$, over an alphabet $\Sigma$ of size $\sigma$, and let $f \colon \Sigma \rightarrow \in \{0,..,n-1\}$ be its corresponding frequency function. 

Let us assume to run the Huffman algorithm on $y$, so as to generate a set of optimal prefix binary codes for the characters of $\Sigma$, represented by the map $\rho \colon \Sigma \rightarrow \{0,1\}^{+}$. For any $c \in \Sigma$, we denote by $|\rho(c)|$ the length of the binary code $\rho(c)$.

Without loss of generality, we assume that the alphabet $\Sigma =\{c_0, c_1, \ldots, c_{\sigma-1}\}$ is arranged as an ordered set in non-decreasing order of frequencies, so that $f(c_i)\leq f(c_{i+1})$ holds, for $0\leq i< \sigma-1$. Thus, $c_0$ is the least frequent character in $y$, while $c_{\sigma-1}$ is the most frequent character in $y$. 





Based on the codes set shown in Fig. \ref{fig:ex-encodings}, the string \texttt{Compression}, of length $11$, is for instance encoded by the binary sequence 
$$
\texttt{1100011010}
\hspace{.0pt}\cdot\hspace{.0pt}\texttt{1100111}
\hspace{.0pt}\cdot\hspace{.0pt}\texttt{101}
\hspace{.0pt}\cdot\hspace{.0pt}\texttt{0110101}
\hspace{.0pt}\cdot\hspace{.0pt}\texttt{11101}
\hspace{.0pt}\cdot\hspace{.0pt}\texttt{01}
\hspace{.0pt}\cdot\hspace{.0pt}\texttt{001}
\hspace{.0pt}\cdot\hspace{.0pt}\texttt{001}
\hspace{.0pt}\cdot\hspace{.0pt}\texttt{11010}
\hspace{.0pt}\cdot\hspace{.0pt}\texttt{1100111}
\hspace{.0pt}\cdot\hspace{.0pt}\texttt{010}\,,
$$
where the individual character codes have been separated by dots to enhance readability.


Let $\max(\rho) \coloneqq \max\{|\rho(c)| : c \in \Sigma\}$ be the length of the longest code of any character in $\Sigma$, and assume that $\lambda$ is a fixed constant such that $1< \lambda \leq \max(\rho)$.
The \newenc codes any string $y$ of length $n$ as an ordered collection of $\lambda$ binary strings representing $\lambda-1$ \emph{fixed layers} and an additional \emph{dynamic layer}. The number $\lambda$ of layers is particularly relevant for the encoding performance. We will refer to this parameter as the \emph{size} of the \newenc representation.

\begin{figure}[t]
\begin{center}
\setlength{\unitlength}{0.006\textwidth}
\setlength{\fboxrule}{1mm}
\setlength\fboxsep{0pt}
\begin{picture}(100,48)

\laylab{0}{36}{$\widehat{Y}_{0}$} \layer{8}{34}{80}{4}{}
\laylab{0}{30}{$\widehat{Y}_{1}$} \layer{8}{28}{80}{4}{}
\laylab{0}{24}{$\widehat{Y}_{2}$} \layer{8}{22}{80}{4}{}
\laylab{0}{18}{$\ldots$} \laylab{64}{18}{$\ldots$}
\laylab{0}{12}{$\widehat{Y}_{\lambda-2}$} \layer{8}{10}{80}{4}{}
\laylab{0}{6}{$\widehat{Y}_D$} \clayer{8}{4}{80}{4}{}{gray!20}

\laylab{16}{42}{$\rho(y[i])$}
\layer{17}{9}{6}{30}{}
\cbox{18}{34}{blue!20}
\cbox{18}{28}{blue!20}
\cbox{18}{22}{blue!20}
\laylab{18}{18}{$\ldots$}
\cbox{18}{10}{blue!20}

\laylab{36}{42}{$\rho(y[j])$}
\layer{37}{3}{6}{36}{}
\cbox{38}{34}{red!20}
\cbox{38}{28}{red!20}
\cbox{38}{22}{red!20}
\laylab{38}{18}{$\ldots$}
\cbox{38}{10}{red!20}
\cbox{38}{4}{red!20}
\underlayer{43}{4}{31}{4}{}{\textrm{delay}}
\cbox{50}{4}{red!20}
\cbox{66}{4}{red!20}
\cbox{70}{4}{red!20}

\end{picture}
\end{center}
\caption{Examples of the reorganization of the bits of two character's code: $\rho(y[i])$ has length $\lambda-1$ and fits within the $\lambda$ layers of the representation; $\rho(y[j])$ has length $\lambda+3$ and its $3$ pending bits are arranged along the dynamic layer.}
\end{figure}
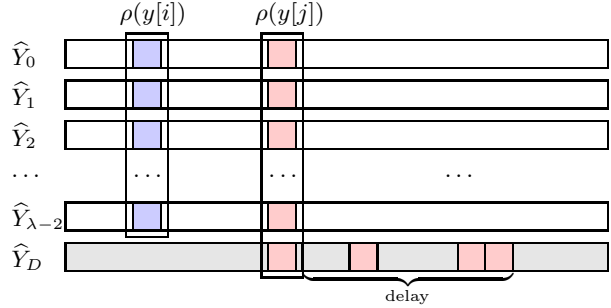

The first $\lambda-1$ binary strings have length $n$; we denote them by $\widehat{Y}_0, \widehat{Y}_1, \ldots, \widehat{Y}_{\lambda-2}$. Specifically, the $i$-th binary string $\widehat{Y}_i$ is the sequence of the $i$-th bits (if present, $0$ otherwise) of the encodings of the characters in $y$, in the order in which they appear in $y$. 
More formally, for $i \leq 0 i \leq \lambda-2$, we have
    $$
	    \widehat{Y}_i \coloneqq \big\langle \rho(y[0])[i],\, \rho(y[1])[i],\, \ldots,\, \rho(y[n-1])[i] \big\rangle,
    $$
where if $i \geq |\rho(y[j])|$ we put $\rho(y[j])[i] = 0$, for $0 \leq j < n$.
Thus, each binary string $\widehat{Y}_i$ can be regarded as an array $Y_i$ of $\ceil{n/w}$ binary blocks of size $w$.
We refer to the bit vectors $\widehat{Y}_0, \widehat{Y}_1, \ldots, \widehat{Y}_{\lambda-2}$ as the \emph{fixed layers} of the encoding, and to the bits stored in them as the \emph{fixed bits}.
    
The last layer of the encoding is a binary string $\widehat{Y}_D \defAs \widehat{Y}_{\lambda-1}$, of length $n_{D} \geq n$, which suitably gathers all the bits at positions $\lambda-1, \lambda, \ldots$ of the character encodings whose length exceeds $\lambda-1$. We refer to such an additional layer as the \emph{dynamic layer}, and to the bits in it as the \emph{pending bits}. 

Fig.~\ref{fig:ex-encodings} shows in the center the \newenc representation of the string \texttt{Compression} with $5$ fixed layers. The fixed layers are represented with a white background, while the dynamic layer has a grey background. Specifically, on the left it is shown a relaxed representation in which, for each encoding, all the pending bits (i.e., the bits past position $\lambda - 2$) are linearly arranged in the additional space of the dynamic layer $\widehat{Y}_D$.

%
%

Pending bits are arranged within the dynamic layer proceeding from left to right and following a last-in first-out (FIFO) scheme. Specifically, such a scheme obeys the following three rules:
\begin{enumerate}[label=\arabic*.]

    \item If the encoding $\rho(y[i])$ has more than one pending bit, then each bit $\rho(y[i])[k+1]$ is stored on the dynamic layer $\widehat{Y}_{D}$ at some position on the right of that at which the bit $\rho(y[i])[k]$ has been stored, for $\lambda-1 \leq k < |\rho(y[i])| - 1$.
    
    
    \item For $0 \leq i < j < n$, each pending bit of $\rho(y[i])$ (if any) is stored in the dynamic layer either in a position $p$ such that $i\leq p < j$ or to the right of all positions at which the pending bits (if any) of $\rho(y[j])$ have been stored.

    
    \item Every pending bit is stored in the leftmost available position in the dynamic layer, complying with Rules 1--2.    
\end{enumerate}

\definecolor{Green}{rgb}{0,0.8,0.5}
\begin{figure}[!t]
    \centering
\begin{small}
$$
\begin{array}{clc}
\textrm{char} & \textrm{code} &  \textrm{length}\\
\hline
\virg{s} & \texttt{001} & 3 \\ 
\virg{e} & \texttt{01} & 2\\ 
\virg{n} & \texttt{010} & 3\\ 
\virg{p} & \texttt{0110101} & 7\\ 
\virg{m} & \texttt{101} & 3\\ 
\virg{C} & \texttt{1100011010} & 10\\ 
\virg{o} & \texttt{1100111} & 4\\ 
\virg{i} & \texttt{11010} & 5\\ 
\virg{r} & \texttt{11101} & 5\\ 
&\\
&\\
&\\
\end{array}~~~~~~~
\begin{array}{ll}
    & \texttt{ \,0 1 2 3 4 5 6 7 8 9 10}\\[0.2cm]
    \hline
    & \texttt{ \,C o m p r e s s i o n }\\

    \begin{array}{l}
        \widehat{Y}_0\\
    \end{array} &
    \begin{array}{|l|}
        \hline
        \texttt{ 1 1 1 0 1 0 0 0 1 1 0 }\\
        \hline
    \end{array}\\[0.1cm]
    
    \begin{array}{l}
        \widehat{Y}_1\\
    \end{array} &
    \begin{array}{|l|}
        \hline
        \texttt{ 1 1 0 1 1 1 0 0 1 1 1 }\\
        \hline
    \end{array}\\[0.1cm]

    \begin{array}{l}
        \widehat{Y}_2\\
    \end{array} &
    \begin{array}{|l|}
        \hline
        \texttt{ 0 0 1 1 1 - 1 1 0 0 0 }\\
        \hline
    \end{array}\\[0.1cm]

    \begin{array}{l}
        \widehat{Y}_3\\
    \end{array} &
    \begin{array}{|l|}
        \hline
        \texttt{ 0 0 - 0 0 - - - 1 0 - }\\
        \hline
    \end{array}\\[0.1cm]

    \begin{array}{l}
        \widehat{Y}_4\\
    \end{array} &
    \begin{array}{|l|}
        \hline
        \texttt{ 0 1 - 1 1 - - - 0 1 - }\\
        \hline
    \end{array}\\[0.1cm]

    \begin{array}{l}
        \\
        \\
        \widehat{Y}_D\\
        \\
        \\
        \end{array} &
    \begin{array}{|l|}
    \hline
        \cellcolor{black!15}\texttt{ {\color{red}1} {\color{Green}1} ~ {\color{orange}0} ~ ~ ~ ~ ~ {\color{blue}1} ~ }\\
        \cellcolor{black!15}\texttt{ {\color{red}1} {\color{Green}1} ~ {\color{orange}1} ~ ~ ~ ~ ~ {\color{blue}1} ~ }\\
        \cellcolor{black!15}\texttt{ {\color{red}0} ~ ~ ~ ~ ~ ~ ~ ~ ~ ~ }\\
        \cellcolor{black!15}\texttt{ {\color{red}1} ~ ~ ~ ~ ~ ~ ~ ~ ~ ~ }\\
        \cellcolor{black!15}\texttt{ {\color{red}0} ~ ~ ~ ~ ~ ~ ~ ~ ~ ~ }\\
    \hline
    \end{array}\\
\end{array}~~~~~~~
\begin{array}{ll}
    & \texttt{ \,0 1 2 3 4 5 6 7 8 9 10}\\[0.2cm]
    \hline
    & \texttt{ \,C o m p r e s s i o n }\\

    \begin{array}{l}
        \widehat{Y}_0\\
    \end{array} &
    \begin{array}{|l|}
        \hline
        \texttt{ 1 1 1 0 1 0 0 0 1 1 0 }\\
        \hline
    \end{array}\\[0.1cm]
    
    \begin{array}{l}
        \widehat{Y}_1\\
    \end{array} &
    \begin{array}{|l|}
        \hline
        \texttt{ 1 1 0 1 1 1 0 0 1 1 1 }\\
        \hline
    \end{array}\\[0.1cm]

    \begin{array}{l}
        \widehat{Y}_2\\
    \end{array} &
    \begin{array}{|l|}
        \hline
        \texttt{ 0 0 1 1 1 - 1 1 0 0 0 }\\
        \hline
    \end{array}\\[0.1cm]

    \begin{array}{l}
        \widehat{Y}_3\\
    \end{array} &
    \begin{array}{|l|}
        \hline
        \texttt{ 0 0 - 0 0 - - - 1 0 - }\\
        \hline
    \end{array}\\[0.1cm]

    \begin{array}{l}
        \widehat{Y}_4\\
    \end{array} &
    \begin{array}{|l|}
        \hline
        \texttt{ 0 1 - 1 1 - - - 0 1 - }\\
        \hline
    \end{array}\\[0.1cm]

    \begin{array}{l}
        \widehat{Y}_D\\
        \end{array} &
    \begin{array}{|l|}
    \hline
        \cellcolor{black!15}\texttt{ {\color{red}1} {\color{Green}1} {\color{Green}1} {\color{orange}0} {\color{orange}1} {\color{red}1} {\color{red}0} {\color{red}1} {\color{red}0} {\color{blue}1} {\color{blue}1} }\\
    \hline
    \end{array}\\[1.6cm]
\end{array}
$$
\end{small}
    \caption{The \newenc representation of the string \texttt{Compression} with $\lambda=6$ (5 fixed layers and an additional dynamic layer). The fixed layers have a white background, while the dynamic layer has a grey one.
    In the middle, it is shown a relaxed representation in which, for each encoding, all pending bits (i.e., the bits past position $\lambda - 2$) are arranged in two dimensions in the additional space of the dynamic layer. On the right, the pending bits have been arranged linearly along the dynamic layer according to a last-in first-out approach. Idle bits are indicated with the symbol \texttt{-}.}
    \label{fig:ex-encodings}
\end{figure}

Fig. \ref{fig:ex-encodings} shows, on the right, the layout of the pending bits, arranged according to the FIFO scheme just illustrated. To better understand it, the pending bits of $\rho(y[0])$, $\rho(y[1])$, $\rho(y[3])$, and $\rho(y[9])$ have been coloured in red, green, orange and blue, respectively. According to Rules 1, 2, and 3, the first pending bit of $\rho(y[0])$, $\rho(y[1])$, $\rho(y[3])$, and $\rho(y[9])$ is stored at positions $0$, $1$, $3$ and $9$ of the dynamic layer $\widehat{Y}_D$, respectively. Next, according to Rules 1, 2, and 3, the second pending bit of $\rho(y[1])$, $\rho(y[3])$, and $\rho(y[9])$ is stored at positions $2$, $4$, and $10$ of $\widehat{Y}_D$, respectively. Finally, according to Rule 1, 2, and 3, the last 4 bits of $\rho(y[0])$ are stored at positions $5$, $6$, $7$ and $8$ of $\widehat{Y}_D$, respectively. This is possible since $\rho(y[2])$, $\rho(y[4])$, $\rho(y[5])$, $\rho(y[6])$, $\rho(y[7])$, $\rho(y[8])$ and $\rho(y[10])$, have no pending bits.

In the following sections, we will describe in detail the encoding and decoding procedures of the \newenc scheme.


\subsection{Encoding}

The encoding procedure is shown in  Fig.~\ref{fig:procedure-delta-encode} (on the left). The procedure takes as input a text $y$ of length $n$, the mapping function $\rho$ generated by the Huffman compression algorithm, and the number $\lambda$ of layers adopted for the \newenc representation.
We recall that the mapping function $\rho$ associates each character $c \in \Sigma$ with a variable-length binary code, and that the set of codes $\{\rho(c)\ | c\in\Sigma\}$ is prefix-free.

The encoding procedure uses a stack $S$ to implement the FIFO strategy for the arrangement of pending bits within the dynamic layer $\widehat{Y}_{D}$. The main \textsf{for} loop of the algorithm (lines 3-12) iterates over all $n$ characters of the text. At the $i$-th iteration the algorithm takes care of inserting the first bits of the encoding $\rho(y[i])$ (lines 5-7) in the fixed layers, up to a maximum of $\lambda-1$ bits.  If $|\rho(y[i])| < \lambda-1$, some bits in the fixed layers, namely $\lambda-|\rho(y[i])|-1$ bits, will remain idle. 
Conversely, when the length of the encoding of the character $y[i]$ exceeds the value $\lambda-1$, the remaining (pending) bits are pushed in reverse order into the stack $S$ (lines 8-9).


At each iteration of the main \textsf{while} cycle, the first bit extracted from the stack is stored in the dynamic layer (lines 10-11). Should the stack be empty at the $i$-th iteration, the $i$-th bit of the dynamic layer will remain idle.

If, at the end of the main \textsf{while} loop, the stack still contains some pending bits, these are placed at the end of the dynamic layer (lines 13-15). This is where the length of the dynamic layer may exceed that of the fixed layers.

\begin{figure}[!t]
{\sffamily
\begin{center}
\begin{tabular}{ll}
\begin{tabular}{ll}
	\multicolumn{2}{l}{readbit$(B, i)$:}\\
	1. & return\ $(B[i] \gg i)\ \&\ 1$\\
    &\\
	\multicolumn{2}{l}{writebit$(B, i, b)$:}\\
	1. & $B \leftarrow B\ |\ (b\ll (w-i))$\\
    &\\
    &\\
	\multicolumn{2}{l}{encode$(y, n, \rho, \lambda)$:}\\
	1.& $S \leftarrow $ new Stack\\
	2.& $i \leftarrow 0$\\
	3.& while $i < n$ do\\
	4.& \qsp $h \leftarrow 0$\\
	5.& \qsp while ($h < \min(|\rho(y[i])|,\lambda-1)$ do\\
	6.& \qsp \qsp writebit$(Y_h,i,\rho(y[i])[h])$\\
	7.& \qsp \qsp $h \leftarrow h+1$\\
	8.& \qsp for $j \leftarrow |\rho(y[i])|-1$ downto $h$ do\\
	9.& \qsp \qsp $S.\emph{push}(\rho(y[i])[j])$\\
	10.& \qsp if ($S\neq \emptyset$) then\\
	11.& \qsp \qsp writebit$(Y_{D},i,S.\emph{pop()})$\\
	12.& \qsp $i \leftarrow i+1$\\
	13.& while ($S\neq \emptyset$) do\\
	14.& \qsp writebit$(Y_{D},i,S.\emph{pop()})$\\
	15.& \qsp $i \leftarrow i+1$\\
	16.& return $Y$\\
\end{tabular}~~ & 
\begin{tabular}{ll}
	\multicolumn{2}{l}{decode$(Y, n, i,j, \emph{root}, \lambda)$:}\\
	1.& $S \leftarrow $ new Stack\\
	2.& $y[j]\leftarrow$ Nil\\
	3.& $k\leftarrow i$ \\
	4.& while ($S\neq \emptyset$ and $y[j]\neq$ Nil) do\\
	5.& \qsp if ($k<n$) then\\
	6.& \qsp \qsp $x \leftarrow \emph{root}$\\
    7.& \qsp \qsp $h \leftarrow 0$\\
    8.& \qsp \qsp while ($h<\lambda-1$ and not $x.\emph{leaf}$) do\\
    9.& \qsp \qsp \qsp if (readbit($Y_{h}[k/w], k \bmod w)$)\\
    10.& \qsp \qsp \qsp then $x\leftarrow x.\emph{right}$\\
    11.& \qsp \qsp \qsp else $x\leftarrow x.\emph{left}$\\
    12.& \qsp \qsp \qsp if ($x.\emph{leaf}$) then $y[p]\leftarrow x.\emph{symbol}$\\
    13.& \qsp \qsp \qsp else $S.\emph{push}(\langle x,k \rangle)$\\
    14.& \qsp \qsp \qsp $h\leftarrow h+1$\\
    15.& \qsp if ($S\neq \emptyset$) then\\
    16.& \qsp \qsp $\langle x,p\rangle \leftarrow S.\emph{pop}()$\\
    17.& \qsp \qsp if (readbit($Y_{D}[k/w], k \bmod w)$) then\\
    18.& \qsp \qsp \qsp $x\leftarrow x.\emph{right}$\\
    19.& \qsp \qsp else $x\leftarrow x.\emph{left}$\\
    20.& \qsp \qsp if ($x.\emph{leaf}$) then $y[p]\leftarrow x.\emph{symbol}$\\
    21.& \qsp \qsp else $S.\emph{push}(\langle x,k \rangle)$\\
    22.& \qsp $k\leftarrow k+1$\\
    23.& return $y[i .. j]$\\
\end{tabular}
\end{tabular}
\end{center}
}
\caption{\label{fig:procedure-delta-encode}Procedures \textsf{encode} and \textsf{decode} for encoding and decoding a text $y$ of length $n$ using the scheme \newenc}
\end{figure}

Assuming that each layer of length $n$, used for the representation, can be initialised at $0^n$ in constant time (or at most in $\lceil n/w\rceil$ steps), the execution time of the encoding algorithm for a text $y$ of length $n$ is trivially equal to the sum of the encoding lengths of each individual character. Thus the encoding of a text of length $n$ is performed in $\mathcal{O}(N)$-time, where $N$ is the length of the encoded text. No additional time is consumed by the algorithm.


\subsection{Decoding}

Although the decoding of a character from a text with a \newenc representation takes place via a direct access to the position where its encoding begins, it does not necessarily take constant time.
In fact, during the decoding of a character $y[i]$, it may be necessary to address some \emph{additional work} related to a certain number of additional characters that have to be decoded in order to complete the full decoding of $y[i]$. 
In particular, if the last bit of the encoding of $y[i]$ is placed at position $j$ in the dynamic layer, where $j \geq i$, then the procedure is forced to decode all the characters from position $i$ to position $j$ of the text, $y[j]$ included. We refer to this addtional work as a \emph{decoding delay}, meaning by this term the number $j-i$ of additional characters that must be decoded in order to complete the decoding of the character $y[i]$. In the case where no characters other than $y[i]$ need to be decoded, then $j=i$ holds and the decoding delay is $0$.

Despite the presence of such a delay, when the whole test, or a just a window of it, needs to be decoded, the additional work done during the delay is not lost: due to the FIFO distribution of pending bits within the dynamic layer, it may indeed happen that a character at position $j$ is decoded before a character at position $i$, with $i<j$. As we shall see later in this section, this happens when the pending bits of the code $\rho(y[i])$ are stored after position $j$.

The decoding procedure is shown in Fig. \ref{fig:procedure-delta-encode} (on the right). We assume that the procedure is invoked to decode the window $y[i..j]$ of the text, with $i\leq j$. 
The procedure takes as input the \newenc representation $Y$ of the text $y$ of length $n$, the starting position, $i$, and the ending position, $j$, of the window to be decoded, the root of the Huffman tree, and the number $\lambda$ of layers adopted for the \newenc representation. Note that if one wants to decode a single character at position $i$, the procedure needs to be invoked with $j=i$. Likewise, to decode the entire text, the decoding procedure needs to be invoked with $i=0$ and $j=n-1$.

Decoding of character $y[i]$ is done by scanning a descending path on the Huffman tree, which starts at the root of the tree and ends in a leaf. Specifically, by scanning the bits of the encoding $\rho(y[i])$, we descend into the left subtree, if we find a bit equal to $0$, or to the right subtree, if we find a bit equal to $1$. We assume that each node $x$ in the tree has a Boolean value associated with it, $x.\textsf{leaf}$, which is set to \textsf{True} in the case where the node $x$ is a leaf. We further assume that $x.\textsf{left}$ allows access to the left child of $x$, while $x.\textsf{right}$ allows access to the right child of node $x$. If $x$ is a leaf node, then it contains an additional parameter, $x.\textsf{symbol}$, which allows access to the character of the alphabet associated with the encoding. 

Again, in a completely symmetrical manner to the encoding procedure, we maintain a stack to extract the pending bits in the dynamic layer. The stack stores the nodes in the Huffman tree relating to those characters in the text for which decoding has started but it has not been completed yet. Within the stack, the node $x$ used for decoding the character $y[p]$ is coupled to the position $p$ itself, so that the symbol can be positioned in constant time once decoded. Thus the elements of the stack are of the form $\langle x, p\rangle$.

For simplicity, in this presentation we denote by $x_p$ the node within the Huffman tree used for decoding the character $y[p]$. Since we use a FIFO strategy for bit arrangements in the dynamic layer, if two nodes $x_i$ and $x_j$, are contained within the stack, with $1\leq i < j\leq n$, then node $x_j$ is closer to the top of the stack than $x_i$ is.

Each iteration of the main \textsf{while} loop of the procedure (lines 4-22) explores vertically the $k$-th bit of each layer, proceeding from the first layer $Y_0$ towards the last layer $Y_D$, in an attempt to decode the $k$-th character of the text, for $k\leq i$. The main loop stops when the entire window, $y[i..j]$, has been fully decoded. This condition occurs when the character $y[j]$ has been decoded (i.e. $y[j]\neq$\textsf{Nil}) and the stack is empty, a condition which ensures that all characters preceding position $j$ in the window have already been decoded (line 4).

Each iteration of the main \textsf{while} is divided into two parts: the first part (line 5-14), which is executed only if $k<n$, explores the Huffman tree, starting from the root and proceeding towards the leaves, in the hope of arriving directly at the decoding of the $k$-th character (this happens when $|\rho(y[k])| < \lambda$). If a leaf is reached at this stage, the text encoding is transcribed (line 12), otherwise the node $x_k$ remains on hold and is placed on the top of the stack (line 13). The second part of the loop (lines 15-21) scans, if necessary (namely when $S\neq \emptyset$), the bit located in the dynamic layer $Y_D$ of the representation and updates the node $x_p$ at the top of the stack accordingly. If the node $x_p$ reaches the position of a leaf, the corresponding symbol is transcribed to the position $y[p]$ and the node is deleted from the stack.

The time complexity required to decode a single character $y[i]$ is $\mathcal{O}(|\rho(y[i])|+d)$, while decoding a text window $y[i..j]$ requires $\mathcal{O}(\sum_{k=i}^{j}|\rho(y[k]|) +d)$, where $d$ is the decoding delay. In Section \ref{sec:delta-analysis}, we prove that the expected value of the delay is, in the worst case, equal to $\mathcal{O}((F_{\sigma - \lambda + 3} - 3)/F_{\sigma+1})$, which is less than $1.0$ for $\sigma \geq 4$.

\subsection{Computing the Average Decoding Delay}
The decoding delay is a key feature for the use of \newenc in practical applications, since the expected access time to text characters is connected to it. This value depends on the distribution of characters within the text and on the number $\lambda$ of layers used for representation. Ideally, the \newenc should use the minimum number of layers to allow the expected value of the decoding delay to fall below a certain user-defined bound. 

Since we do not know the frequency of the characters within the text nor their distribution, it is not possible to define a priori the number of layers that is most suitable for our application. However, it is possible to compute the expected decoding delay value by simulating the construction of the SFDC on the text for a given number of layers. Should this value be above the bound, we would repeat the computation with a higher number of layers, until we reach a value of $\lambda$ that guarantees an acceptable delay.

\begin{figure}[!t]
{\sffamily
\begin{center}
\begin{tabular}{ll}
	\multicolumn{2}{l}{compute-delay$(y, n, \rho, \lambda)$:}\\
	1.& $S \leftarrow $ new Stack\\
	2.& $d \leftarrow i \leftarrow 0$\\
	3.& while $i < n$ do\\
	4.& \qsp $h \leftarrow 0$\\
	5.& \qsp while ($h < \min(|\rho(y[i])|,\lambda-1)$ do $h \leftarrow h+1$\\
	6.& \qsp if ($h < |\rho(y[i])|$) then $S.\emph{push}(\langle i, h \rangle)$\\
	7.& \qsp if ($S\neq \emptyset$) then\\
	8.& \qsp \qsp $\langle p, h \rangle \leftarrow S.\emph{pop()}$\\
	9.& \qsp \qsp if ($ h+1 < |\rho(y[p])|$) then $S.\emph{push}(\langle p, h+1 \rangle)$\\
	10.& \qsp \qsp else $d \leftarrow d + i-p$ \\
	11.& \qsp $i \leftarrow i+1$\\
	12.& while ($S\neq \emptyset$) do\\
	13.& \qsp $\langle p, h \rangle \leftarrow S.\emph{pop()}$\\
	14.& \qsp while ($h < |\rho(y[i])|$) do\\
	15.& \qsp \qsp $ h \leftarrow h+1$\\
	16.& \qsp \qsp $i \leftarrow i+1$\\
	17.& \qsp $d \leftarrow d + i-p-1$ \\
	18.& return $d/n$\\
\end{tabular}
\end{center}
}
\caption{\label{fig:procedure-delta-delay}Procedure \textsf{compute-delay} for computing the average decoding delay of the \newenc of a text $y$ of length $n$ using $\lambda$ layers. The procedure takes as input also the mapping $\rho$ and the number $\lambda$ of fixed layers.}
\end{figure}

Fig. \ref{fig:procedure-delta-delay} shows the procedure \textsf{compute-delay} used for computing the average decoding delay of the \newenc of a text $y$ of length $n$ using $\lambda$ layers. The procedure simulates the construction of the representation \newenc for the text $y$, without actually realising it. The instruction flow of the procedure closely follows that used for encoding. 
However, in contrast to the encoding procedure, the stack maintains pairs of the type $\langle p,h \rangle$, where $p$ is the position in the text of the character $y[p]$ whose pending bits are still waiting to be placed in the dynamic layer, while $h$ is the position of the next bit in $\rho(y[p])$ that is to be placed in the dynamic layer. 

The average value of the delay, identified by the variable $d$, is initialised to $0$. 
During the $i$-th iteration, when a bit needs to be placed in the dynamic layer, a $\langle p,h \rangle$ pair is extracted from the stack (provided it is not empty) and the bit $\rho(y[p])[h]$ is placed at position $i$ of the dynamic layer. If $h+1$ is still less than $|\rho(y[p])|$, then the pair $\langle p,h+1 \rangle$ is placed again at the top of the stack. Conversely, if the last pending bit of $y[p]$ has been placed, then the delay value $d$ is increased by the amount $i-p$, which corresponds to the decoding delay of the character $y[p]$.
At the end of the procedure, the value $d$ is divided by $n$ in order to obtain the average value of the decoding delay.

Assuming that the $\lambda$ layers of the \newenc can be initialised at $0^n$ in constant time, the complexity of procedure \textsf{compute-delay} is equal to $\mathcal{O}(N)$, where we recall that $N=\sum_{i=0}^{n-1}|\rho(y[i])|$.


\section{Applications to Text Processing}
Existing state-of-the-art compression schemes encode
data by extensive and convoluted references between pieces of information, leading to strong compression guarantees, but often making it difficult to efficiently perform compressed text processing. Although recent developments have moved towards designing more computation-friendly compression schemes, achieving both strong compression and allowing for efficient computation, there are few solutions that remain competitive against algorithms operating on plain texts.

In contract, as we have just reported above, the \newenc encoding arranges textual data in a  simple two-dimensional structure so that text processing may proceed both \emph{horizontally}, by reading bits (or bit blocks) along a binary vector in a given layer, and  \emph{vertically}, by moving from a given layer to the next one while reconstructing the code of a character (or of a group of characters).
Thanks to such two-dimensional structure, the \newenc counts many favourable and interesting features that make it particularly suitable for use in text processing applications.

First of all, it naturally allows \emph{parallel computation} on textual data. Since a string is partitioned in $\lambda$ independent binary vectors, it is straightforward to entrust the processing of each bit vector to a different processor, provided that the corresponding $\lambda$ outputs are then blended.

It is also well suited for \emph{parallel accessing of multiple data}. A single computer word maintains, indeed, partial information about $w$ contiguous characters. For instance, assume we are interested in searching all occurrences of the character \texttt{a} in a given text $y$, where $\rho(\texttt{a})= 0001011$. If it is discovered that $Y_0[k] = 1^w$, while inspecting the $k$-th block of the bit-stream, then such block can immediately be ignored as we are sure it does not contain any occurrences of the character. 
Although we can access only a fraction (namely a single bit) of each character code in the block, such information can be processed in constant time, also by exploiting the intrinsic parallelism of bitwise operations, which allows one to cut down the horizontal processing by a factor up to $w$.

In addition, it allows also \emph{adaptive accesses} to textual data. This means that processing may not always need to go all the way in depth along the layers of the representation, as in favourable cases accesses can stop already at the initial layers of the representation of a given character. For instance, assume as above to be interested in searching all occurrences of the character \texttt{a} in the text $y$. If it is discovered that $\widehat{Y}_0[i] = 1$, while inspecting character $y[i]$, then such a position can immediately be marked as a mismatch and so it will not be necessary to scan the other layers. This feature may allow, under suitable conditions, to cut down the vertical data processing by a factor up to $\lambda$.

The \newenc scheme turns out to be also well suited for \emph{cache-friendly accesses} to textual data. Indeed, the cache-hit rate is very crucial for good performances, as each cache miss results in fetching data from primary memory (or worse form secondary memory), which takes considerable time.\footnote{Data fetching takes hundreds of cycles from the primary memory and tens of billions of cycles from secondary memory.} In comparison, reading data from the cache typically takes only a handful of cycles.
According to the principle of spatial locality, if a particular vector location is referenced at a particular time, then it is expected that nearby positions will be referenced in the near future. Thus, in present-day computers, it is common to attempt to guess the size of the area around the current position for which it is worthwhile to prepare for faster accesses for subsequent references. Assume, for instance, that such a size is of $k$ positions, so that, after a cache miss, when a certain block of a given array is accessed, the next $k$ blocks are also stored in cache memory. Then, under the standard text representation, we have at most $k$ supplementary text accesses before the next cache miss, whereas in our bit-layers representation such number may grow by a factor up to $\lambda$, resulting, under suitable conditions, in faster accesses to text characters.

Finally, we note that the proposed representation has the advantageous feature of \emph{blind matching}, a technique that allows comparison between strings even without them being decoded. Since \newenc encoding is designed so that two identical strings are encoded in two memory configurations that are (almost) identical, the verification of equality between strings can be limited to the blind comparison of the blocks involved in the representation.

Assume, for example, that we want to check the occurrence of a pattern $x$ of length $m$ at position $j$ of a text $y$ of length $n$, i.e., we want to check whether $x[0\,..\,m-1]=y[j\,..\,j+m-1]$. Since the idle bits within the fixed layers are set to zero, we know that the pattern has a match at position $j$ of the text if and only if the configuration of the fixed layers of the text, at the positions concerned, matches that of the fixed layers of the pattern, i.e., if $Y_h[j\,..\,j+m-1] = X_h[0\,..\,m-1]$. This allows the comparison to be made without decoding the characters of the two strings, but simply comparing the blocks involved in the representation.

If the comparison of the fixed layers finds a match, it remains to compare the pending bits arranged within the dynamic layer. However, we observe that the FIFO distribution of pending bits ensures that, if the characters in the pattern match those in the text window, then the configuration of the pending bits turns out to be the same, with the sole exception of those positions in the dynamic layer that remained idle (set to $0$) in the pattern representation. Such positions in the text may instead be filled by bits (set to $1$) of some character's codes preceding the window.
As a consequence, in case one does not wish to decode the window for the occurrence check, it would be enough to add an additional \emph{blind-match} layer to the pattern representation in which the bits are all set to $1$, except for those positions corresponding to idle bits.
At the cost of $\mathcal{O}(m)$ additional bits, this layer would allow the last dynamic layer to be compared in \emph{blind} mode, effectively saving a huge amount of time during text processing.

As a case study, in Appendix \ref{sec:appx-string-matching} we  describe in more detail how this new technique can be adapted to the problem of exact string matching, and propose a solution to the problem capable of searching directly on \newenc encoded strings. In Section \ref{sec:experimental-real-data}, we show how such solution is surprisingly faster than any other algorithm operating on plain texts. Although a number of solutions to the string-matching problem capable of operating directly on encoded texts can be found in the literature (see for instance \cite{Navarro98,G13,SWM0021}), none of these solutions exceeds in terms of running time the more efficient algorithms operating on plain texts. From this point of view, the solution proposed in this work represents a breakthrough in the field of Computation-Friendly Compression.


\section{Complexity Issues}\label{sec:delta-analysis}

In this section, we propose a theoretical analysis of the performance of the new \newenc encoding, in terms of expected values of decoding delay and average number of idle bits. The first value is related to the average access time, while the second value gives us an estimate of the additional space used by the encoding with respect to the minimum number of bits required to compress the sequence.

In our analysis, we will make some simplifying assumptions that, however, do not significantly affect the results. When necessary, in order to support the reasonableness of our arguments, we will compare theoretical results with experimentally obtained data, reproducing the same conditions as in the theoretical analysis.

\subsection{Fibonacci Numbers and Some Useful Identities}
Before entering into the details of our analysis, we state and prove two useful identities related to the Fibonacci numbers, which will be used throughout this section.

The well-known Fibonacci number series is as follows:
\[
F_{0}=0;~F_{1}=1;~F_{n+2} = F_{n} + F_{n+1},~n \geq 0.
\]
The following identities hold for every $n \geq 0$:
\begin{align}
\sum_{i=0}^{n} F_{i} &= F_{n+2} - 1\,, \label{partialSumFibonacci}\\
\sum_{i=0}^{n} i F_{i} &= n F_{n+2} - F_{n+3} + 2 \label{SumNFibonacci}\,.
\end{align}
Both identities can be proved by induction on $n$.

As for the well-known identity \eqref{partialSumFibonacci}, we have $\sum_{i=0}^{0} = 0 = F_{0+2} -1$ for the base case and
\[
\sum_{i=0}^{n+1} F_{i} = \sum_{i=0}^{n}F_{i} + F_{n+1} = F_{n+2} -1  + F_{n+1} = F_{(n+1)+2} - 1
\]
for the inductive step.

%

\smallskip

Concerning \eqref{SumNFibonacci}, we have $\sum_{i=0}^{0} i F_{i} = 0 = 0 \cdot F_{0+2} - F_{0+3} + 2$ for the base case and 
\begin{align*}
\sum_{i=0}^{n+1} i F_{i} &= \sum_{i=0}^{n} i F_{i} + (n+1) F_{n+1} \\
&= \big(n F_{n+2} - F_{n+3} + 2\big)  + (n+1) F_{n+1}\\
&= (n+1)\big(F_{n+1} + F_{n+2}\big) - F_{n+2} - F_{n+3} + 2\\
&= (n+1)F_{(n+1)+2} - F_{(n+1)+3} + 2
\end{align*}
for the inductive step.



\COMMENT{
    For notational convenience, we also define a slight modification  of the Fibonacci sequence, by putting
    \[
    f_{i} \defAs \begin{cases}
    1 & \text{if } i = 0\\
    F_{i} & \text{otherwise.}
    \end{cases}
    \]
    Hence, by \eqref{partialSumFibonacci}, we have $\sum_{i=0}^{n} f_{i} = F_{n+2}$, for every $n \in \mathbb{N}$.
}
    
 \subsection{Texts with Fibonacci Character Frequencies}

Let $\Sigma \defAs \{c_{0},c_{1},\ldots,c_{\sigma - 1}\}$ be an alphabet of size $\sigma$, and assume that the characters in $\Sigma$ are ordered in non-decreasing order with respect to their frequency in a given text $y$ (of length $n$), namely $f(c_i)\leq f(c_{i+1})$ holds, for $0 \leq i < \sigma - 1$, where $f\colon \Sigma \rightarrow \mathbb{N}^{+}$ is the frequency function.

In order to keep the number of bits used by our encoding scheme as low as possible, it is useful to choose the number $\lambda$ of layers as close as possible to the expected value
\[
    e_{\rho} \defAs \displaystyle \frac{1}{n}\sum_{i=0}^{n-1}|\rho(y[i])|
\]
of the length of the Huffman encodinsgs of the characters in the text $y$. In particular, if we set $\lambda= \lceil e_{\rho} \rceil$, it is guaranteed that the number of idle bits, equal to $n(\lambda - e_{\rho})$, is the minimum possible. As already observed, should the decoding delay prove to be too high in practical applications, the value of $\lambda$ can be increased.

The optimal case for the application of the \newenc method, in terms of decoding delay, is when all the characters of the alphabet have the same frequency.
In this circumstance, in fact, the character Huffman encodings have length comprised between $\lfloor \log \sigma \rfloor$ and $\lceil \log \sigma \rceil$.\footnote{More precisely there are $2\sigma - 2^{\lfloor \log \sigma \rfloor + 1}$ characters with encodings of length $\lceil \log \sigma \rceil + 1$ and $2^{\lceil \log \sigma \rceil + 1} - \sigma$ characters with encodings of length $\lceil \log \sigma \rceil$.}
Hence, in this case it suffices to use a number of layers equal to $\lambda= \lceil \log \sigma \rceil$ to obtain a guaranteed direct access to each character of the text, hence a decoding delay equal to $0$.

On the other side of the spectrum, the worst case occurs when the Huffman tree, in its canonical configuration, is completely unbalanced. In this case, in fact, the difference between the expected value $e_{\rho}$ and the maximum length of the Huffman encodings, namely $\big(\max_{c \in \Sigma} |\rho(c)|\big) - e_{\rho}$
as large as possible, thus causing the maximum possible increase of the average decoding delay.
In addition, we observe that the presence of several characters in the text whose encoding length is less than $e_{\rho}$ does not affect positively the expected value of the delay, since pending bits are placed exclusively in the dynamic layer. Rather, the presence of such \emph{short-code} characters increases the number of idle bits of the encoding.

\smallskip

We call such a completely unbalanced tree a \emph{degenerate tree}.
Among all the possible texts whose character frequency induces a degenerate Huffman coding tree, the ones that represent the worst case for our encoding scheme are the texts in which the frequence differences  $f(c_{i})-f(c_{i-1})$, for $i=1,2,\ldots,\sigma-1$ are minimal.
It is not hard to verify that such condition occurs when the frequencies follow a Fibonacci-like distribution of the following form
\begin{equation}\label{fibonacci-like}
    f(c_{i}) \defAs \begin{cases}
    1 & \text{if } i = 0\\
    F_{i} & \text{if } 1 \leq i \leq \sigma-1\,.
    \end{cases}
\end{equation}
(or any multiple of it).

Thus, let us assume that $y$ be a random text over $\Sigma$ with the frequency function \eqref{fibonacci-like} (so that, by \eqref{partialSumFibonacci}, $|y| = F_{\sigma+1}$), and let $f^r \colon \Sigma \rightarrow [0,1]$ be the relative frequency function of the text $y$, where
    \[
    f^r(c_{i}) \defAs \frac{f(c_{i})}{F_{\sigma + 1}}, \qquad \text{for } i=0,1,\ldots, \sigma-1.
    \]
Assume also that $\rho \colon \Sigma \rightarrow \{0,1\}^{+}$ is the canonical degenerate Huffman encoding of $\Sigma$ relative to the frequency function \eqref{fibonacci-like}, with 
    \[
    |\rho(c_{i})| = \begin{cases}
    \sigma - 1 & \text{if } i = 0\\
    \sigma - i & \text{if } 1 \leq i \leq \sigma-1.
    \end{cases}
    \]

\newpage

\begin{lemma}\label{lem:avg-code-len}
    The expected encoding length $E[|\rho(y[i])|]$ by $\rho$ of the characters in $y$ is equal to $\frac{F_{\sigma+3} - 3}{F_{\sigma + 1}}$.
\end{lemma}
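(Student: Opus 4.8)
The plan is to compute the expectation directly from its definition. Since $y$ is a text over $\Sigma$ whose character $c_i$ occurs with absolute frequency $f(c_i)$ given by \eqref{fibonacci-like}, and $|y| = F_{\sigma+1}$ by \eqref{partialSumFibonacci}, the expected code-length is
\[
E[|\rho(y[i])|] = \sum_{i=0}^{\sigma-1} f^r(c_i)\,|\rho(c_i)| = \frac{1}{F_{\sigma+1}}\sum_{i=0}^{\sigma-1} f(c_i)\,|\rho(c_i)|.
\]
So everything reduces to evaluating the integer sum $T \defAs \sum_{i=0}^{\sigma-1} f(c_i)|\rho(c_i)|$ in closed form and dividing by $F_{\sigma+1}$.

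First I would split off the $i=0$ term, which contributes $f(c_0)|\rho(c_0)| = 1\cdot(\sigma-1) = \sigma-1$, and for $1 \leq i \leq \sigma-1$ substitute $f(c_i)=F_i$ and $|\rho(c_i)| = \sigma - i$. This gives
\[
T = (\sigma-1) + \sum_{i=1}^{\sigma-1} F_i(\sigma - i) = (\sigma-1) + \sigma\sum_{i=1}^{\sigma-1}F_i - \sum_{i=1}^{\sigma-1} i F_i.
\]
Now I would apply the two Fibonacci identities proved just above: by \eqref{partialSumFibonacci}, $\sum_{i=1}^{\sigma-1}F_i = \sum_{i=0}^{\sigma-1}F_i = F_{\sigma+1}-1$ (the $i=0$ term is zero), and by \eqref{SumNFibonacci} with $n=\sigma-1$, $\sum_{i=1}^{\sigma-1} iF_i = \sum_{i=0}^{\sigma-1} iF_i = (\sigma-1)F_{\sigma+1} - F_{\sigma+2} + 2$. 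Substituting,
\[
T = (\sigma-1) + \sigma\big(F_{\sigma+1}-1\big) - \big((\sigma-1)F_{\sigma+1} - F_{\sigma+2} + 2\big).
\]

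The remaining work is routine algebraic simplification: the $\sigma F_{\sigma+1}$ and $-(\sigma-1)F_{\sigma+1}$ terms combine to $F_{\sigma+1}$, the constant and linear-in-$\sigma$ terms $(\sigma-1) - \sigma - 2 = -3$ cancel the $\sigma$ dependence, and adding $F_{\sigma+2}$ gives $T = F_{\sigma+1} + F_{\sigma+2} - 3 = F_{\sigma+3} - 3$ by the Fibonacci recurrence. Dividing by $F_{\sigma+1}$ yields $E[|\rho(y[i])|] = \frac{F_{\sigma+3}-3}{F_{\sigma+1}}$, as claimed. I do not anticipate a genuine obstacle here; the only point requiring care is bookkeeping the index ranges so that the two identities (stated for sums starting at $i=0$) are applied correctly, and confirming that the degenerate code-length formula $|\rho(c_i)| = \sigma - i$ for $i \geq 1$ (together with $|\rho(c_0)| = \sigma-1$) is exactly the canonical Huffman code-length for the Fibonacci frequencies — but this is already asserted in the text preceding the lemma, so it may be taken as given.
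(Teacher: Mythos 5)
Your proof is correct and follows essentially the same route as the paper: a direct computation of $\sum_{i=0}^{\sigma-1} f^r(c_i)|\rho(c_i)|$, splitting off the $i=0$ term and applying the two Fibonacci identities \eqref{partialSumFibonacci} and \eqref{SumNFibonacci} to reach $F_{\sigma+1}+F_{\sigma+2}-3 = F_{\sigma+3}-3$. The only difference is cosmetic bookkeeping of the $\sigma-1$ term, so nothing further is needed.
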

\begin{proof}
Using \eqref{partialSumFibonacci} and \eqref{SumNFibonacci}, we have:
    \begin{align*}
    \displaystyle E[|\rho(y[i])|] &= \sum_{i=0}^{\sigma-1} f^r(c_{i}) \cdot |\rho(c_{i})|\\[0.1cm]
    &=\displaystyle \sum_{i=0}^{\sigma-1} \frac{f(c_{i})}{F_{\sigma + 1}} \cdot |\rho(c_{i})|\\[0.1cm]
    &=\displaystyle \frac{1}{F_{\sigma + 1}} \cdot \left( \sum_{i=1}^{\sigma-1} (\sigma -i) F_{i} + \sigma -1 \right)\\[0.1cm]
    &=\displaystyle \frac{1}{F_{\sigma + 1}} \cdot \left( \sigma \cdot \sum_{i=0}^{\sigma-1} f_{i} - \sum_{i=1}^{\sigma-1} i \cdot F_{i} - 1\right)\\[0.1cm]
    &=\displaystyle \frac{1}{F_{\sigma + 1}} \cdot \left( \sigma  F_{\sigma+1} - (\sigma -1) F_{\sigma+1} + F_{\sigma+2} - 3 \right)\\[0.1cm]
    &=\displaystyle \frac{1}{F_{\sigma + 1}} \cdot \left( F_{\sigma+1} + F_{\sigma+2} - 3 \right)\\[0.1cm]
    &=\displaystyle \frac{F_{\sigma+3} - 3}{F_{\sigma + 1}}.
    \end{align*}
\qed
\end{proof}

By recalling that ${\displaystyle\lim_{\sigma \rightarrow \infty}} F_{\sigma}/\phi^{\sigma} = 1$, where $\phi = \frac{1}{2}(1 + \sqrt{5})$ is the golden ratio, we have 
    \[
    \lim_{\sigma \rightarrow \infty}  E[|\rho(y[i])|]  = \lim_{\sigma \rightarrow \infty} \frac{F_{\sigma+3} - 3}{F_{\sigma + 1}} = \phi^{2} = \frac{1}{2}(3 + \sqrt{5}) \approx 2.618\,.
    \]
    

\begin{table}[!t]
\begin{small}
    \centering
    \begin{tabular}{c}
    \hline
    \textsc{Idle Bits (Theoretical)}\\[0.2cm]
    \begin{tabular*}{0.45\textwidth}{@{\extracolsep{\fill}}|c|lll|}
        \hline
        \textsc{~$\lambda$ \textbackslash ~$\sigma$} & 10 &	20 & 30~~~~~~~~\\
        \hline
        5 & 2.42 & 2.38 & 2.38\\
        6 & 3.42 & 3.38 & 3.38\\
        7 & 4.42 & 4.38 & 4.38\\
        8 & 5.42 & 5.38 & 5.38\\
        \hline
    \end{tabular*}
    \end{tabular}~~~~
    \begin{tabular}{c}
    \hline
    \textsc{Idle Bits (Experimental)}\\[0.2cm]
    \begin{tabular*}{0.45\textwidth}{@{\extracolsep{\fill}}|c|lll|}
        \hline
        \textsc{~$\lambda$ \textbackslash ~$\sigma$} & 10 &	20 & 30~~~~~~~~\\
        \hline
        5	&   2.29	&   2.26	&   2.27\\
        6	&   3.29	&   3.27	&   3.27\\
        7	&   4.30	&   4.29	&   4.29\\
        8	&   5.30	&   5.31	&   5.31\\
        \hline
    \end{tabular*}
    \end{tabular}\\[0.2cm]
\end{small}
    \caption{Expected number of idle bits for each text element, in a \newenc encoding of a text over an alphabet with Fibonacci frequencies, comparing theoretical values against values experimentally computed. Values are shown for $5\leq \lambda \leq 8$ and $\sigma\in\{10,20,30\}$.}
    \label{tab:degenerate-idle}
\end{table}

As a consequence of the above, the expected number of idle bits in a \newenc encoding using $\lambda$ layers can be estimated by the following formula:
\[
\lambda - \frac{F_{\sigma+3}-3}{F_{\sigma+1}}.
\]

Table \ref{tab:degenerate-idle} shows the expected number of idle bits for each text element, in a \newenc encoding of a text over an alphabet with Fibonacci frequencies, comparing theoretical values against values experimentally computed. To obtain the experimental values, we used artificially generated texts of $100$ MB so that the frequency of the characters of the alphabet would result in a Fibonacci frequency.

It is easy to verify how, as the size of the alphabet $\sigma$ increases, the function quickly converges to the value $\lambda - 2.618$.
Consequently, if we assume that the value $\lambda$ represents a constant implementation-related parameter, the total space used by \newenc for encoding a sequence of $n$ elements is equal to $N+\mathcal{O}\left(n \left(\lambda - \frac{F_{\sigma+3}-3}{F_{\sigma+1}}\right) \right) = N + \mathcal{O}(n)$, where $N \defAs \sum_{i=0}^{\sigma-1}f(c_i) \cdot |\rho(c_i)|$.

\subsection{Expected Decoding Delay}
On the basis of what was shown above, we now estimate the expected value of the decoding delay, i.e., the number of additional characters that need to be decoded in order to obtain the full encoding of a character of the text.

Assume that the $i$-th character of the text is $y[i]=c_k$, where  $0\leq k < \sigma$ and $0\leq i < n$. We want to estimate the position $j\geq i$ at which the last pending bit of the encoding $\rho(y[i]$), is placed, so the decoding delay of character $y[i]$ will be $j-i$.
    
Let us assume that the number of layers of the representation is equal to $\lambda$. If the length of the encoding of $y[i]$ falls within the number $\lambda$ of layers, namely $|\rho(y[i])| \leq \lambda$, then the character can be decoded in a single cycle and the delay is $0$.
On the other hand, if $|\rho(y[i])| > \lambda$, then $|\rho(y[i])| -\lambda$ bits remain to be read, all arranged within the dynamic layer. 

Since in the average case the length of the encoding of any text character can be approximated to the value $2.618$, if we assume to use a fixed number of layers $\lambda \geq 4$ we expect that the pending bits of $\rho(y[i])$ will be placed on the average in the $|\rho(y[i])|-\lambda$ positions past position $i$. The average delay in this case can be estimated by $|\rho(y[i])|-\lambda$.

Thus, for an alphabet of $\sigma$ characters with frequency function \eqref{fibonacci-like}, the expected delay of our Fibonacci encoding with $\lambda$ layers (where $4 \leq \lambda \leq \sigma -1$) can be estimated by the following summation:
\[
\sum_{i=0}^{\sigma - \lambda - 1} f^r(c_{i}) \cdot  (|\rho(c_{i})| - \lambda).
\]
We claim that
\begin{align}
&\sum_{i=0}^{\sigma - \lambda - 1} f^r(c_{i}) \cdot  (|\rho(c_{i})| - \lambda) = \frac{F_{\sigma - \lambda + 3} - 3}{F_{\sigma+1}}.\label{b}
\end{align}
Indeed, 
\begin{align*}
\displaystyle \sum_{i=0}^{\sigma - \lambda - 1} &f^r(c_{i}) \cdot  (|\rho(c_{i})| - \lambda) \\[0.1cm]
&\displaystyle=  \frac{1}{F_{\sigma+1}} \cdot \sum_{i=0}^{\sigma - \lambda - 1} f_{i} \cdot  (|\rho(c_{i})| - \lambda)\\[0.1cm]
&=\displaystyle  \frac{1}{F_{\sigma+1}} \cdot \left( \sum_{i=1}^{\sigma - \lambda - 1} F_{i} \cdot (\sigma - \lambda - i) + \sigma - \lambda - 1 \right) \\[0.1cm]
&=\displaystyle  \frac{1}{F_{\sigma+1}} \cdot \left( (\sigma - \lambda) \cdot \sum_{i=1}^{\sigma - \lambda - 1} F_{i} - \sum_{i=1}^{\sigma - \lambda - 1} iF_{i} + \sigma - \lambda - 1 \right)  \\[0.1cm]
&=\displaystyle  \frac{(\sigma - \lambda) (F_{\sigma - \lambda + 1} - 1) - (\sigma - \lambda - 1)F_{\sigma - \lambda + 1} + F_{\sigma - \lambda + 2} - 2 +  \sigma - \lambda - 1}{F_{\sigma+1}} \\[0.1cm]
&=\displaystyle  \frac{F_{\sigma - \lambda + 1} + F_{\sigma - \lambda + 2} - 3}{F_{\sigma+1}} \\[0.1cm]
&=\displaystyle  \frac{F_{\sigma - \lambda + 3} - 3}{F_{\sigma+1}}.
\end{align*}

\begin{table}[!t]
\begin{small}
    \centering
    \begin{tabular}{c}
    \hline
    \textsc{Average Delay (Theoretical)}\\[0.2cm]
    \begin{tabular*}{0.45\textwidth}{@{\extracolsep{\fill}}|c|lllll|}
        \hline
        \textsc{~$\lambda$ \textbackslash ~$\sigma$} & 10 & 15 & 20 & 25 & 30\\
        \hline
        5 & 0.20 & 0.23 & 0.24 & 0.24 & 0.24 \\
        6 & 0.11 & 0.14 & 0.15 & 0.15 & 0.15 \\
        7 & 0.06 & 0.09 & 0.09 & 0.09 & 0.09 \\
        8 & 0.02 & 0.05 & 0.06 & 0.06 & 0.06 \\
        \hline
    \end{tabular*}
    \end{tabular}~~~~
    \begin{tabular}{c}
    \hline
    \textsc{Average Delay (Experimental)}\\[0.2cm]
    \begin{tabular*}{0.45\textwidth}{@{\extracolsep{\fill}}|c|lllll|}
        \hline
        \textsc{~$\lambda$ \textbackslash ~$\sigma$} & 10 & 15 & 20 & 25 & 30\\
        \hline
        5 & 0.31 & 0.37 & 0.37 & 0.39 & 0.39 \\
        6 & 0.15 & 0.20 & 0.18 & 0.17 & 0.21 \\
        7 & 0.07 & 0.11 & 0.11 & 0.11 & 0.11 \\
        8 & 0.02 & 0.06 & 0.06 & 0.07 & 0.06 \\
        \hline
    \end{tabular*}
    \end{tabular}\\[0.2cm]
\end{small}
    \caption{The average delay in decoding a single character in a text over an alphabet with Fibonacci frequencies using \newenc. Theoretical values (on the left) compared with experimental values (on the right). Values are tabulated for $5 \leq \lambda \leq 8$ and $\sigma\in\{10,15,20,25,30\}$.}
    \label{tab:fibonacci-delay}
\end{table}

Table~\ref{tab:fibonacci-delay} reports the delay in decoding a single character in a text over an alphabet with Fibonacci frequencies using \newenc, comparing values resulting from the identity \eqref{b} against values computed experimentally.  To obtain the experimental values, we generated texts of $100$ MB whose character frequency is the Fibonacci frequency \eqref{fibonacci-like}. For each such text $y$, the average delay value was experimentally obtained by accessing all the characters in the text in random order, computing the corresponding delays, and dividing the sum of the values so obtained by the total number of characters in $y$.

\newpage

\section{An Even More Succinct Variant}\label{sec:gamma-variant}
In this section, we show how it is possible to achieve, by foregoing some advantageous aspects of the \newenc encoding, a variant that is able to reduce both the expected value of the delay and, under suitable conditions, the total amount of consumed space. The new variant, which we will refer to as $\gamma$-\newenc, would then provide an even more succinct representation of the text, while still allowing direct access to the characters in the encoded sequence.

The main idea behind the new variant is to distribute the pending bits of those characters whose encoding length exceeds the number of fixed layers in any available position within all layers and not exclusively within the dynamic layer. In other words, the $\gamma$-\newenc encoding makes no longer a role distinction between the fixed layers and the dynamic layer, arranging the pending bits using a FIFO strategy to take advantage of all those positions that remain idle in any layer. 

Provided that the number of idle bits within the fixed layers is enough to maintain all the pending bits that would have been arranged within the dynamic layer, it would be possible to reduce the number of layers in the representation by one, so that $\gamma$-\newenc would be more succinct than the standard variant.

On the other hand, as can easily be observed, for the same number of fixed layers, the $\gamma$ variant of the \newenc offers a lower expected delay since the pending bits of a character encoding also find their place within the fixed layers. Thus, on average, decoding of a given character is resolved by accessing fewer additional characters.

\begin{figure}[!t]
    \centering
\begin{small}
$$
\begin{array}{ll}
    & \texttt{ \,0 1 2 3 4 5 6 7 8 9 10}\\[0.2cm]
    \hline
    & \texttt{ \,C o m p r e s s i o n }\\

    \begin{array}{l}
        \widehat{Y}_0\\
    \end{array} &
    \begin{array}{|l|}
        \hline
        \texttt{ 1 1 1 0 1 0 0 0 1 1 0 }\\
        \hline
    \end{array}\\[0.1cm]
    
    \begin{array}{l}
        \widehat{Y}_1\\
    \end{array} &
    \begin{array}{|l|}
        \hline
        \texttt{ 1 1 0 1 1 1 0 0 1 1 1 }\\
        \hline
    \end{array}\\[0.1cm]

    \begin{array}{l}
        \widehat{Y}_2\\
    \end{array} &
    \begin{array}{|l|}
        \hline
        \texttt{ 0 0 1 1 1 {\color{red}1} 1 1 0 0 0 }\\
        \hline
    \end{array}\\[0.1cm]

    \begin{array}{l}
        \widehat{Y}_3\\
    \end{array} &
    \begin{array}{|l|}
        \hline
        \texttt{ 0 0 {\color{Green}1} 0 0 {\color{red}0} - - 1 0 {\color{blue}1} }\\
        \hline
    \end{array}\\[0.1cm]

    \begin{array}{l}
        \widehat{Y}_4\\
    \end{array} &
    \begin{array}{|l|}
        \hline
        \texttt{ 0 1 {\color{red}1} 1 1 - - - 0 1 - }\\
        \hline
    \end{array}\\[0.1cm]

    \begin{array}{l}
        \widehat{Y}_5\\
        \end{array} &
    \begin{array}{|l|}
    \hline
        \texttt{ {\color{red}1} {\color{Green}1} {\color{red}0} {\color{orange}0} {\color{orange}1} - - - - {\color{blue}1} - }\\
    \hline
    \end{array}\\
\end{array}~~~~~~~
\begin{array}{ll}
    & \texttt{ \,0 1 2 3 4 5 6 7 8 9 10}\\[0.2cm]
    \hline
    & \texttt{ \,C o m p r e s s i o n }\\

    \begin{array}{l}
        \widehat{Y}_0\\
    \end{array} &
    \begin{array}{|l|}
        \hline
        \texttt{ 1 1 1 0 1 0 0 0 1 1 0 }\\
        \hline
    \end{array}\\[0.1cm]
    
    \begin{array}{l}
        \widehat{Y}_1\\
    \end{array} &
    \begin{array}{|l|}
        \hline
        \texttt{ 1 1 0 1 1 1 0 0 1 1 1 }\\
        \hline
    \end{array}\\[0.1cm]

    \begin{array}{l}
        \widehat{Y}_2\\
    \end{array} &
    \begin{array}{|l|}
        \hline
        \texttt{ 0 0 1 1 1 {\color{orange}0} 1 1 0 0 0 }\\
        \hline
    \end{array}\\[0.1cm]

    \begin{array}{l}
        \widehat{Y}_3\\
    \end{array} &
    \begin{array}{|l|}
        \hline
        \texttt{ 0 0 {\color{Green}1} 0 0 {\color{orange}1} {\color{red}1} {\color{red}1} 1 0 {\color{blue}1} }\\
        \hline
    \end{array}\\[0.1cm]

    \begin{array}{l}
        \widehat{Y}_4\\
    \end{array} &
    \begin{array}{|l|}
        \hline
        \texttt{ 0 1 {\color{Green}1} 1 1 {\color{red}1} {\color{red}0} {\color{red}0} 0 1 {\color{blue}1} }\\
        \hline
    \end{array}\\[0.5cm]
\end{array}
$$
\end{small}
    \caption{Two $\gamma$-\newenc representations of the string \texttt{Compression} with $6$ and $5$ layers, respectively. 
    In both cases the pending bits have been arranged in all available positions of the representation using a last-in first-out approach. Idle bits are indicated by the symbol \texttt{-}.}
    \label{fig:ex-encodings2}
\end{figure}

Fig. \ref{fig:ex-encodings2} depicts the $\gamma$-\newenc encoding of the string \texttt{Compression} with $\lambda=6$ and $\lambda=5$ layers, respectively. The pending bits have been arranged in all available positions of the representation using a last-in first-out approach. Observe that the pending bits of the code $\rho(y[0])$ are placed within the fixed layers, decreasing the value of the character's decoding delay by $3$.

It is to be noted that one of the negative aspects of this variant is the lack of all those positive features that make \newenc  particularly suitable for the application in text-processing problems. In fact, the effect of the new arrangement of pending bits is that the memory configurations of the encodings of two sub-strings of the text can vary significantly, despite the fact that they are the same. This can occur, for example, when available positions within the fixed layers are occupied by bits of character encodings preceding the sub-string.

\begin{figure}[!t]
{\sffamily
\begin{center}
\begin{tabular}{ll}
\begin{tabular}{ll}
	\multicolumn{2}{l}{$\gamma$-encode$(y, n, \rho, \lambda)$:}\\
	1.& $S \leftarrow $ new Stack\\
	2.& $i \leftarrow 0$\\
	3.& while (True) do\\
	4.& \qsp if $i<n$ then\\
	5.& \qsp \qsp for $j \leftarrow |\rho(y[i])|-1$ downto $0$ do\\
	6.& \qsp \qsp \qsp $S.\emph{push}(\rho(y[i])[j])$\\
	7.& \qsp $h \leftarrow 0$\\
	8.& \qsp while ($h < \lambda$ and $S\neq \emptyset$) do\\
	9.& \qsp \qsp writebit$(Y_h,i,S.\emph{pop()})$\\
	10.& \qsp \qsp $h \leftarrow h+1$\\
	11.& \qsp if ($i\geq n$ and $S=\emptyset$) return $Y$\\
    12.& \qsp $i \leftarrow i+1$\\
    &\\
    &\\
    &\\
    &\\
\end{tabular}~~ &
\begin{tabular}{ll}
	\multicolumn{2}{l}{$\gamma$-decode$(Y, n, i,j, \emph{root}, \lambda)$:}\\
	1.& $S \leftarrow $ new Stack\\
	2.& $y[j]\leftarrow$ Nil\\
	3.& $k\leftarrow i$ \\
	4.& while ($S\neq \emptyset$ and $y[j]\neq$ Nil) do\\
	5.& \qsp if ($k<n$) then $S.\emph{push}(\langle root,k \rangle)$\\
	6.& \qsp $h \leftarrow 0$\\
    7.& \qsp while ($h<\lambda$ and $S\neq \emptyset$) do\\
    8.& \qsp \qsp $\langle x,p\rangle \leftarrow S.\emph{pop}()$\\
    9.& \qsp \qsp if (readbit($Y_h[k/w], k \bmod w)$) then\\
    10.& \qsp \qsp \qsp $x\leftarrow x.\emph{right}$\\
    11.& \qsp \qsp else $x\leftarrow x.\emph{left}$\\
    12.& \qsp \qsp if ($x.\emph{leaf}$) then $y[p]\leftarrow x.\emph{symbol}$\\
    13.& \qsp \qsp else $S.\emph{push}(\langle x,p \rangle)$\\
    14.& \qsp \qsp $h\leftarrow h+1$\\
    15.& \qsp $k\leftarrow k+1$\\
    16.& return $y[i .. j]$\\
\end{tabular}  
\end{tabular}
\end{center}
}
\caption{\label{fig:procedure-gamma-encode}Procedure $\gamma$-\textsf{encode} and $\gamma$-\textsf{decode} for encoding and decoding a text $y$ of length $n$ using the scheme $\gamma$-\newenc.}
\end{figure}

\smallskip

The encoding procedure of the $\gamma$-\newenc is shown in Fig. \ref{fig:procedure-gamma-encode} (on the left). The $i$-th iteration of the main \textsf{while} loop of line 3 vertically explores  position $i$ of the $\lambda$ layers of the representation. If $i<n$ is the position of the character $y[i]$ of the text, then the bits of the encoding $\rho(y[i])$ are all inserted at the top of the stack, in reverse order. Then the procedure arranges the bits in the stack, extracting them one by one, within the layers, from the first layer $\widehat{Y}_0$ to the last layer $\widehat{Y}_D$. If during the exploration of the layer $\widehat{Y}_k$, with $0 \leq k < \lambda$, the stack is empty, the $i$-th bit of the layers from $\widehat{Y}_k$ to $\widehat{Y}_D$ will remain idle.


The decoding procedure is shown in Fig.~\ref{fig:procedure-gamma-encode} (on the right). Also in this case we assume that the procedure is invoked to decode a window $y[i\,..\,j]$ of the text, with $i\leq j$. 
The procedure takes as input the $\gamma$-\newenc representation $Y$ of a text $y$ of length $n$, the starting position $i$ and the ending position $j$ of the window to be decoded, the root of the Huffman tree, and the number $\lambda$ of layers adopted in the $\gamma$-\newenc representation.

As in the previous case, the decoding of the character $y[i]$ is done by scanning a descending path on the Huffman tree, starting at the root and ending in a leaf of the tree. A stack is maintained containing the nodes in the Huffman tree related to those characters in the text for which decoding has started but it has not yet been completed. The position $p$ of the node $x_p$ (the node used for decoding the character $y[p]$) are coupled and pushed within the stack.

Each iteration of the main \textsf{while} loop of the procedure explores vertically the $k$-th bit of each layer, proceeding from the first layer $Y_0$ towards the last layer $Y_D$, in the attempt to decode the $k$-th character of the text, for $k\geq i$. The main loop stops when the entire window $y[i\,..\,j]$ has been fully decoded. This condition occurs when the character $y[j]$ has been decoded (i.e.,  $y[j]\neq {}$\textsf{Nil}) and the stack is empty, a condition which ensures that all the characters preceding position $j$ in the window have already been decoded (line 4).

At the beginning of each iteration of the main \textsf{while} loop, the node $x_k={}$\emph{root} is pushed into the stack. Subsequently, a nested \textsf{while} loop is executed, which goes up along the $k$-th position of all layers. At the $h$-th step, a node is extracted from the stack and is updated, based on the bit $\widehat{Y}_h[k]$. If after the update the node has become a leaf, then a new symbol is decoded, otherwise it is re-entered into the stack for the next iteration. The loop continues until the last layer has been reached or until the stack empties.

\begin{figure}[!t]
{\sffamily
\begin{center}
\begin{tabular}{ll}
	\multicolumn{2}{l}{$\gamma$-compute-delay$(y, n, \rho, \lambda)$:}\\
	1.& $S \leftarrow $ new Stack\\
	2.& $d \leftarrow 0$\\
	2.& $i \leftarrow 0$\\
	3.& while (True) do\\
	4.& \qsp if $i<n$ then $S.\emph{push}(\langle i, 0\rangle)$\\
	7.& \qsp $h \leftarrow 0$\\
	8.& \qsp while ($h < \lambda$ and $S\neq \emptyset$) do\\
	15.& \qsp \qsp $\langle p, j \rangle \leftarrow S.\emph{pop()}$\\
	16.& \qsp \qsp if ($j+1 < |\rho(y[p])|$) then\\
	17.& \qsp \qsp \qsp $S.\emph{push}(\langle p, j+1\rangle)$\\
	18.& \qsp \qsp else $d \leftarrow d+p-i$\\
	10.& \qsp \qsp $h \leftarrow h+1$\\
	11.& \qsp if ($i\geq n$ and $S=\emptyset$) return $d/n$\\
    12.& \qsp $i \leftarrow i+1$\\
\end{tabular}
\end{center}
}
\caption{\label{fig:procedure-gamma-delay}Procedure \textsf{$\gamma$-compute-delay} for computing the average decoding delay of the $\gamma$-\newenc of a text $y$ of length $n$ using $\lambda$ layers. The procedure takes as input also the mapping $\rho$ and the number $\lambda$ of fixed layers}
\end{figure}

Finally, Figure \ref{fig:procedure-gamma-delay} presents the procedure \textsf{$\gamma$-compute-delay} used for the preliminary computation of the minimum value of $\lambda$ that guarantees that the decoding delay falls below a given bound. As in the case of the standard variant, the procedure takes as input the string $y$ to be encoded and the number $\lambda$ of layers used for the representation. The purpose of the procedure is to simulate the construction of the \newenc encoding in order to compute and output the average value of the decoding delay.

It is straightforward to verify that the three procedures described above achieve a $\mathcal{O}(N)$ time complexity, as in the case of the standard variant.

\subsection{Expected Decoding Delay}
In this section, we derive a simple estimate of the expected value of the decoding delay for the $\gamma$-variant presented above of our encoding.

Assume again that the $i$-th character of the text is $y[i]=c_k$, where $0\leq k < \sigma$ and $0\leq i < n$. In order to compute the decoding delay for character $y[i]$, it is necessary to guess the position $j\geq i$ at which the last pending bit of the encoding $\rho(y[i])$ is placed, so that the decoding delay will be $j-i$.
    
Let $\lambda$ be the number of layers of the representation. As noted earlier, if the length of the encoding of $y[i]$ falls within the layers, namely if $|\rho(y[i])| \leq \lambda$, then the decoding of the character can be performed in a single cycle and the delay is $0$.
Otherwise, if $|\rho(y[i])| > \lambda$, then $|\rho(y[i])|-\lambda$ bits remain to be read. 

In the average case, the length of the encoding of a text character can be approximated to the value $e_{\rho}\approx 2.618$. If we assume to use a fixed number of layers $\lambda \geq 4$, we expect that the pending bits of $\rho(y[i])$ will be placed in the $\lambda-e_{\rho}$ bits left unused by characters following position $i$.
Thus, for an alphabet of $\sigma$ characters, we estimate the expected delay for the $\gamma$-\newenc variant of our Fibonacci encoding with $\lambda$ layers (where $4 \leq \lambda \leq \sigma -1$)
by way of the following summation:
\[
\sum_{i=0}^{\sigma - \lambda - 1} f^r(c_{i}) \cdot \left\lceil \frac{|\rho(c_{i})| - \lambda}{\lambda - e_{\rho}}\right\rceil ,
\]
where $e_{\rho} \defAs E[|\rho(y[i])|]$. Specifically, the following inequalities hold:
\begin{align}
&\frac{F_{\sigma - \lambda + 3}-3}{\lambda F_{\sigma+1} - F_{\sigma+3} + 3} \leq \sum_{i=0}^{\sigma - \lambda - 1} f^r(c_{i}) \left\lceil \frac{|\rho(c_{i})| - \lambda}{\lambda - e_{\rho}}\right\rceil 
< \frac{F_{\sigma - \lambda + 3}-3}{\lambda F_{\sigma+1} - F_{\sigma+3} + 3} + \frac{F_{\sigma - \lambda +1}}{F_{\sigma+1}}\label{a}
\end{align}

Since we plainly have 
\[
\sum_{i=0}^{\sigma - \lambda - 1} f^r(c_{i}) \cdot \frac{|\rho(c_{i})| - \lambda}{\lambda - e_{\rho}} 
\leq 
\sum_{i=0}^{\sigma - \lambda - 1} f^r(c_{i}) \cdot \left\lceil \frac{|\rho(c_{i})| - \lambda}{\lambda - e_{\rho}}\right\rceil
<
\sum_{i=0}^{\sigma - \lambda - 1} f^r(c_{i}) \cdot \frac{|\rho(c_{i})| - \lambda}{\lambda - e_{\rho}} + \sum_{i=0}^{\sigma - \lambda - 1} f^r(c_{i})
\]
and $\sum_{i=0}^{\sigma - \lambda - 1} f^r(c_{i}) = \frac{F_{\sigma - \lambda +1}}{F_{\sigma+1}}$, to prove \eqref{a} it is enough to show that
\begin{equation}\label{noCeil}
\sum_{i=0}^{\sigma - \lambda - 1} f^r(c_{i}) \cdot \frac{|\rho(c_{i})| - \lambda}{\lambda - e_{\rho}} = \frac{F_{\sigma - \lambda + 3}-3}{\lambda F_{\sigma+1} - F_{\sigma+3} + 3}
\end{equation}
holds, which we do next.

We have:
\begin{align*}
\sum_{i=0}^{\sigma - \lambda - 1} &f^r(c_{i}) \cdot \frac{|\rho(c_{i})| - \lambda}{\lambda - e_{\rho}}
= \sum_{i=0}^{\sigma - \lambda - 1} \frac{f(c_{i})}{F_{\sigma+1}} \cdot   \frac{|\rho(c_{i})| - \lambda}{\lambda - e_{\rho}}\\
&= \frac{1}{F_{\sigma+1}} \cdot \left( \sum_{i=1}^{\sigma - \lambda - 1}  F_{i} \cdot \frac{\sigma - \lambda - i}{\lambda - e_{\rho}} +  \frac{\sigma - \lambda - 1}{\lambda - e_{\rho}}\right)\\
&= \frac{1}{F_{\sigma+1}} \cdot \left( \sum_{i=1}^{\sigma - \lambda - 1}  F_{i} \cdot \frac{\sigma - \lambda - i}{\lambda - \frac{F_{\sigma+3} - 3}{F_{\sigma + 1}}} +  \frac{\sigma - \lambda - 1}{\lambda - \frac{F_{\sigma+3} - 3}{F_{\sigma + 1}}}\right)\\
&< \frac{1}{F_{\sigma+1}} \cdot \left( \sum_{i=1}^{\sigma - \lambda - 1}  F_{i} \cdot \frac{(\sigma - \lambda - i) F_{\sigma+1}}{\lambda F_{\sigma+1} - F_{\sigma+3} + 3} + \frac{(\sigma - \lambda - 1)F_{\sigma+1}}{\lambda F_{\sigma+1} - F_{\sigma+3} + 3}
\right)\\
&= \sum_{i=1}^{\sigma - \lambda - 1}  F_{i} \cdot \frac{\sigma - \lambda - i}{\lambda F_{\sigma+1} - F_{\sigma+3} + 3} +   \frac{\sigma - \lambda - 1}{\lambda F_{\sigma+1} - F_{\sigma+3} + 3}\\
&= \frac{1}{\lambda F_{\sigma+1} - F_{\sigma+3} + 3} \cdot \left( (\sigma - \lambda) \cdot \sum_{i=1}^{\sigma - \lambda - 1}  F_{i} - \sum_{i=1}^{\sigma - \lambda - 1}  iF_{i}  + \sigma - \lambda -1 \right) \\
&= \frac{(\sigma - \lambda) (F_{\sigma- \lambda +1}-1) - (\sigma - \lambda - 1) F_{\sigma-\lambda+1} + F_{\sigma - \lambda - 2} - 2 + \sigma - \lambda  -1}{\lambda F_{\sigma+1} - F_{\sigma+3} + 3} \\
&= \frac{F_{\sigma - \lambda +2}  - 3}{\lambda F_{\sigma+1} - F_{\sigma+3} + 3},
\end{align*}
proving \eqref{noCeil}, and in turn \eqref{a}.

\begin{table}[!t]
\begin{small}
    \centering
    \begin{tabular}{c}
    \hline
    \textsc{Average Delay (Theoretical)}\\[0.2cm]
    \begin{tabular*}{0.45\textwidth}{@{\extracolsep{\fill}}|c|lll|}
        \hline
        \textsc{~$\lambda$ \textbackslash ~$\sigma$} & 10 &	20 & 30~~~~~~~~\\
        \hline
        5	& 0.39	& 0.42	& 0.42\\
        6	& 0.17	& 0.19	& 0.19\\
        7	& 0.09	& 0.10	& 0.10\\
        8	& 0.05	& 0.05	& 0.06\\
        \hline
    \end{tabular*}
    \end{tabular}~~~~
    \begin{tabular}{c}
    \hline
    \textsc{Average Delay (Experimental)}\\[0.2cm]
    \begin{tabular*}{0.45\textwidth}{@{\extracolsep{\fill}}|c|lll|}
        \hline
        \textsc{~$\lambda$ \textbackslash ~$\sigma$} & 10 &	20 & 30~~~~~~~~\\
        \hline
        5	&   0.29	&   0.37	&   0.38\\
        6	&   0.10	&   0.15	&   0.16\\
        7	&   0.06	&   0.08	&   0.08\\
        8	&   0.03	&   0.04	&   0.04\\
        \hline
    \end{tabular*}
    \end{tabular}\\[0.2cm]
\end{small}
    \caption{The average delay for decoding a single character in a text over an alphabet with Fibonacci frequencies using $\gamma$-\newenc. Theoretical values (on the left) compared with experimental values (on the right). Values are tabulated for $5 \leq \lambda \leq 8$ and $\sigma\in\{10,15,20,25,30\}$.}
    \label{tab:degenerate-delay}
\end{table}

Table~\ref{tab:degenerate-delay} shows (on the left) the values of the expected delay obtained from equation \eqref{a} for $4\leq \lambda \leq 12$ and $\sigma \in \{10,20,30\}$. These values can be compared with the average delay obtained experimentally\footnote{For details of the machine used to perform our experimental evaluation, we refer the reader to  Section~\ref{sec:experimental-real-data}.} on texts over alphabets with Fibonacci frequencies (on the right). 
It can be observed that, given the number of layers $\lambda$, the function \eqref{a} has an increasing trend as the size of the alphabet varies. However, the function has a horizontal asymptote whose limit is reached very quickly. 


%

\newpage
    
\section{Experimental Evaluation on Real Data}\label{sec:experimental-real-data}
The \newenc offers a practical approach to the problem of compressed representation allowing direct (and fast) access to the symbols of the sequence.  In addition to the more natural application of the compressed representation of a sequence of symbols (such as natural texts, biological sequences, source codes, etc.), this new representation can be used successfully in many applications where direct access to the elements of a compressed sequence of integers is required, a case which occurs frequently in the design of compressed data structures such as suffix trees, suffix arrays, and inverted indexes, to name a few. 

In this section, we show experimentally how the \newenc scheme offers a competitive alternative to other encoding schemes that support direct access. Specifically, we compare \newenc with DACs and Wavelet Trees, which are among of the most recent and effective representations of variable-length codes allowing for direct access. In \cite{BLN13}, where a more in-depth experimental analysis is conducted, it is possible to analyse how DACs compares to the other main succinct representations, in terms of space and access times.

Our experiments were performed on $5$ real data sequences and on an LCP sequence. 
The real data sequences are text files of size $100$ MB from the  Pizza\&Chili corpus (\url{http://pizzachili.dcc.uchile.cl}), and specifically an XML file, two English texts, a protein sequence, and a DNA sequence. 
The last dataset has been obtained by computing the Longest Common Prefix (LCP) array of the XML file.
We remember that the LCP array is a central data structure in stringology and text indexing \cite{MM93}. Given a text $y$ of length $n$ and the set of all its  suffixes, sorted in lexicographic order, the LCP array stores, for each suffix, the length of the longest common prefix between each suffix and its predecessor. Most LCP values are small, but some can be very large. Hence, a variable-length encoding scheme is a good solution to represent this sequence of integers.

Following the notation proposed in \cite{BLN13}, we denote by \textsc{dblp} and \textsc{lcp.dblp} the XML file and the LCP array obtained from the XML file, which contains bibliographic information on major computer science journals and proceedings. We denote by \textsc{english} the English text containing different English text files with slightly different character distributions, while we denote by \textsc{bible} an English highly repetitive text obtained by concatenating 25 times a file containing the 4MB King James version of the Bible. Finally, we denote by \textsc{protein} and \textsc{dna} the protein and DNA texts, containing protein sequences and gene DNA sequences consisting, respectively, of uppercase letters of the 20 amino acids  and uppercase letters A,G,C,T, plus some other few occurrences of special characters.

\begin{table}[!t]
    \centering
    \begin{tabular}{|l|ccccc|}
    \hline
    &&&&&\\[-0.2cm]
    ~~\textsc{Text} & ~~~~$\sigma$~~~~ &  ~$\textsc{Max}\{c_i\}$~ & ~$\textsc{Avg}\{c_i\}$~ & ~$\textsc{Max}\{|\rho(y[i])|\}$~ & ~~$\textsc{Avg}\{|\rho(y[i])|\}$~~\\
    &&&&&\\[-0.2cm]
    \hline
    ~~\textsc{protein~~~~}   & 25 & 90  & 75.92 & 11 & 4.22 \\
    ~~\textsc{dblp}      & 96 & 126 & 86.55 & 21 & 5.26 \\
    ~~\textsc{english}   & 94 & 122 & 89.53 & 20 & 4.59 \\
    ~~\textsc{bible}     & 90 & 122 & 91.24 & 17  & 4.88 \\
    ~~\textsc{dna}     & 16 & 89 & 72.24 & 12  & 2.21 \\
    ~~\textsc{lcp.dblp} & 1,085 & 1,084 & 44.45 & 27  & 6.82 \\
    \hline
    \end{tabular}\\[0.2cm]
    \caption{Some interesting information about the dataset used in our experimental results. All sequences are of $104,857,600$ elements.}
    \label{tab:sequence-info}
\end{table}

Some interesting information about this dataset is shown in Table \ref{tab:sequence-info}. Specifically, we report: the size $\sigma$ of the alphabet, the largest numerical value ($\textsc{Max}\{c_i\}$) contained within the sequence, and the average value ($\textsc{Avg}\{c_i\}$) contained therein. We also report the maximum length and average length ($\textsc{Max}\{|\rho(y[i])|\}$ and $\textsc{Avg}\{|\rho(y[i])|\}$, respectively) of the Huffman code generated for each sequence. 


All experiments have been executed locally on a MacBook Pro with 4 Cores, a 2 GHz Intel Core i7 processor, 16 GB RAM 1600 MHz DDR3, 256 KB of L2 Cache, and 6 MB of Cache L3. We compiled with \textsf{gcc} version \textsf{13.0.0} and the option \textsf{-O3}.
For the implementation of the DACs we used the codes available at \href{http://lbd.udc.es/research/DACS/}{http://lbd.udc.es/research/DACS/}. For the implementation of the Wavelet Trees (WT) we use the Huffman tree shaped variant of WT \cite{MN05} which turned out to be the most effective among the WT implementations available in the  \href{http://simongog.github.io/sdsl/index.html}{Succinct Data Structure Library (SDSL).}
All the implementations of the encodings presented in this paper and all the datasets are available in the Google Drive repository, accessible at
\href{https://drive.google.com/drive/folders/1e4aPz_TR9m4BIYo5fX0KYQhbLNuW9WG8?usp=sharing}{this link}.

\subsection{Space Consumption and Decoding Times}

Table \ref{tab:plain-results} reports experimental results obtained by comparing our \newenc variants against the DACs on the six datasets described above. We compared the encoding schemes in terms of:
    average space for text element (\textsc{Space}), in number of bits;
    full decoding time (\textsc{Decode}), in seconds;
    average access time for element (\textsc{Access}), in microseconds;
    average decoding delay (\textsc{Delay}), in number of characters.
Solutions with the minimum value of $\lambda$ for which an average decoding delay less than $1$ is obtained have been highlighted in light grey.

\begin{table}[!t]
\begin{small}
    \centering
    \rotatebox[origin=c]{90}{
        \makebox[2.2cm][c]{\textsc{lcp.dblp}} 
        \makebox[2.2cm][c]{\textsc{dna~~~~}} 
        \makebox[1.7cm][c]{\textsc{bible}} 
        \makebox[1.7cm][c]{\textsc{english~}} 
        \makebox[1.7cm][c]{\textsc{dblp}}
        \makebox[1.7cm][c]{\textsc{protein~~}} 
        ~~~~~~
        }~~
    \begin{tabular}{|l||c|c||c|c|c|c||c|c|c|c|}
        \hline
        \textsc{~~Text} &
        ~~~\textsc{WT}~~~&
        ~~~\textsc{DACs}~~~&
        \multicolumn{4}{c|}{\textsc{\newenc}}&
        \multicolumn{4}{c|}{\textsc{$\gamma$-\newenc}}\\[0.2cm]	
        ~~~$\lambda$ & & & ~~~$5$~~~ & ~~~$6$~~~ & ~~~$7$~~~ & ~~~$8$~~~ & ~~~$5$~~~ & ~~~$6$~~~ & ~~~$7$~~~ & ~~~$8$~~~\\
        \hline
        &&&&\cellcolor{black!10}&&&\cellcolor{black!10}&&&\\[-0.3cm]
        \textsc{~~Space} & 6.16 & 6.45 & 5.00 &\cellcolor{black!10} 6.00 & 7.00 & 8.00 &\cellcolor{black!10} 5.00 & 6.00 & 7.00 & 8.00 \\
        \textsc{~~Decode} & 5.47 & 0.86 & 1.24 &\cellcolor{black!10} 1.11 & 1.17 & 1.26 &\cellcolor{black!10} 1.40 & 1.37 & 1.36 & 1.42 \\
        \textsc{~~Access} & 0.95 & 0.07 & 0.83 &\cellcolor{black!10} 0.73 & 0.72 & 0.74 &\cellcolor{black!10} 0.72 & 0.72 & 0.68 & 0.70\\
        \textsc{~~Delay~~} & - & - & 1.05 &\cellcolor{black!10} 0.51 & 0.29 & 0.12 &\cellcolor{black!10} 0.89 & 0.26 & 0.11 & 0.07 \\
        &&&&\cellcolor{black!10}&&&\cellcolor{black!10}&&&\\[-0.3cm]
        \hline
        &&&&\cellcolor{white}&\cellcolor{black!10}&&\cellcolor{white}&&\cellcolor{black!10}&\\[-0.3cm]
        \textsc{~~Space} & 7.68 & 7.23 & 5.26 & 6.00 &\cellcolor{black!10} 7.00 & 8.00 & 5.26 & 6.00 &\cellcolor{black!10} 7.00 & 8.00 \\
        \textsc{~~Decode} & 5.87 & 0.93 & 1.66 & 1.42 &\cellcolor{black!10} 1.39 & 1.45 & 1.77 & 1.66 &\cellcolor{black!10} 1.56 & 1.56\\
        \textsc{~~Access} & 1.05 & 0.07 & - & 0.79 &\cellcolor{black!10} 0.77 & 0.74 & - & 0.74 &\cellcolor{black!10} 0.74 & 0.79\\
        \textsc{~~Delay~~} & - & - & - & 2.19 &\cellcolor{black!10} 0.41 & 0.12 & - & 1.31 &\cellcolor{black!10} 0.25 & 0.07 \\
        &&&&&\cellcolor{black!10}&&&&\cellcolor{black!10}&\\[-0.3cm]
        \hline
        &&&&&\cellcolor{white}&&&&\cellcolor{white}&\cellcolor{black!10}\\[-0.3cm]        
        \textsc{~~Space} & 6.72 & 7.42 & 5.00 & 6.00 & 7.00 & 8.00 & 5.00 & 6.00 & 7.00 &\cellcolor{black!10} 8.00 \\
        \textsc{~~Decode} & 5.62 & 0.85 & 1.76 & 1.53 & 1.34 & 1.38 & 2.04 & 2.04 & 2.00 &\cellcolor{black!10} 1.99\\
        \textsc{~~Access} & 0.96 & 0.06 & 625 & 134 & 12.7 & 4.17 & - & 14.24 & 4.39 &\cellcolor{black!10} 0.73 \\
        \textsc{~~Delay~~} & - & - & 49K & 7.2K & 870 & 436 & 13K & 729 & 142 &\cellcolor{black!10} 0.41\\
        &&&&&&&&&&\cellcolor{black!10}\\[-0.3cm]
        \hline
        &&&\cellcolor{black!10}&&&&\cellcolor{black!10}&&&\cellcolor{white}\\[-0.3cm]
        \textsc{~~Space~~} & 6.40 &  6.65  &\cellcolor{black!10} 5.00 & 6.00 & 7.00 & 8.00 &\cellcolor{black!10} 5.00 & 6.00 & 7.00 & 8.00 \\
        \textsc{~~Decode~~} & 5.12 & 0.81  &\cellcolor{black!10} 0.94 & 1.04 & 1.26 & 1.34 &\cellcolor{black!10} 0.87 & 0.88 & 0.93 & 0.88 \\
        \textsc{~~Access~~} & 0.99 & 0.06  &\cellcolor{black!10} 0.70 & 0.71 & 0.78 & 0.74 &\cellcolor{black!10} 0.70 & 0.68 & 0.70 & 0.72 \\
        \textsc{~~Delay~~} & - & -  &\cellcolor{black!10} 0.74 & 0.14 & 0.03 & 0.01 &\cellcolor{black!10} 0.46 & 0.11 & 0.02 & 0.01 \\
        &&&\cellcolor{black!10}&&&&\cellcolor{black!10}&&&\\[-0.3cm]
        \hline
        \hline
        &&&\cellcolor{white}&&&&\cellcolor{white}&&&\\[-0.3cm]
        ~~~$\lambda$&&& $3$ & $4$ & $5$ & $6$ & $3$ & $4$ & $5$ & $6$\\
        \hline        
        &&&\cellcolor{black!10}&&&&\cellcolor{black!10}&&&\\[-0.3cm]
        \textsc{~~Space~~} & 3.20 & 6.23  &\cellcolor{black!10} 3.00 & 4.00  & 5.00 & 6.00 &\cellcolor{black!10} 3.00 & 4.00  & 5.00 & 6.00 \\
        \textsc{~~Decode~~} & 2.67 & 0.81  &\cellcolor{black!10} 0.84 & 0.70  & 0.84 & 0.98 &\cellcolor{black!10} 0.80 & 0.80 & 0.76 &  0.75 \\
        \textsc{~~Access~~} & 0.71 & 0.07  &\cellcolor{black!10} 0.65 & 0.64  & 0.66 & 0.68 &\cellcolor{black!10} 0.64 & 0.64  & 0.63 & 0.65\\
        \textsc{~~Delay~~} & & -  &\cellcolor{black!10} 0.00 & 0.00 & 0.00 & 0.00 &\cellcolor{black!10}0.00 & 0.00 & 0.00 & 0.00\\
        &&&\cellcolor{black!10}&&&&\cellcolor{black!10}&&&\\[-0.3cm]
        \hline
        \hline
        &&&\cellcolor{white}&&&&\cellcolor{white}&&&\\[-0.3cm]
        ~~~$\lambda$&&& $10$ & $11$ & $12$ & $13$ & $10$ & $11$ & $12$ & $13$\\
        \hline        
        &&&&&&&&&\cellcolor{black!10}&\\[-0.3cm]
        \textsc{~~Space~~}  & 10.0 & 7.25  & 10.0 & 11.0 & 12.0 & 13.0 & 10.0 & 11.0 &\cellcolor{black!10} 12.0 & 13.0\\
        \textsc{~~Decode~~} & 7.45 & 0.98  & 1.94 & 1.39 & 1.38 & 1.43 & 1.93 & 1.91 &\cellcolor{black!10} 1.92 & 1.93\\
        \textsc{~~Access~~} & 1.26 & 0.11  & 1.56 & 0.86 & 0.95 & 0.82 & 0.97 & 0.79 &\cellcolor{black!10} 0.79 & 0.78\\
        \textsc{~~Delay~~}  & - & -  & 31.5 & 5.11 & 3.05 & 1.70 & 16.1 & 1.88 &\cellcolor{black!10} 0.80 & 0.31\\
        \hline            
        \end{tabular}\\[0.2cm]\end{small}
    \caption{Experimental results obtained by comparing different variable length codes allowing direct access. Results are obtained on five plain texts of size 100 MB and on an LCP array. We report the average space for text element (in bit), full decoding time (in seconds),  average access time for element (in microseconds) and average decoding delay (in number of characters). Solutions with the minimum value of $\lambda$ for which an average decoding delay less than 1 is obtained have been highlighted in light grey.}
    \label{tab:plain-results}
\end{table}

\smallskip
Concerning the space consumed by the representation, i.e., the average number of bits used for each element of the sequences, the \newenc variants offer very competitive performance by using, essentially, only the bits connected to the $\lambda$ layers used by the representation. When $\lambda\leq 6$ this value is always below the space used by the DACs and the WTs, and often continues to be so even for values of $\lambda=7$.  It is important to note that DACs and WTs already offer the best performance when compared to other representation methods \cite{BLN13}, and this makes SFDC encoding one of the best solutions in terms of space consumption.
The only downside is the one in the case of \textsc{lcp.dblp}, where up to $12$ layers are required to achieve a delay below the bound, which is almost double the value proposed by the DACs. However, we observe that already with 10 bits acceptable, if not optimal, results are obtained, where DACs require nearly 8 bits per element.

A particularly favourable case is that of the \textsc{dna} dataset, where a delay value below the threshold $1.00$ is obtained with only $3$ bits per element. The dataset in fact presents a particularly uniform distribution of the $4$ characters A, C, G and T that make up a genomic sequence. However, the presence of some special characters outside the $4$ main elements of the alphabet forces the encoding to use $3$ layers. The number of bits per element is however very close to the minimum limit of $2.21$ bits.

\smallskip
Regarding the time consumed to decode the whole file, we observe that the times required by \newenc are often slightly above the times required by DACs for the same operation. However, they remain very comparable and, in some cases, offer better performance. In particular, this is the case of DNA sequences.

\smallskip
The access time to the elements of the compressed sequence was computed by accessing all the characters of the text and measuring, for each of them, the time required for decoding.
However, the characters were accessed in random order in order to avoid possible advantages due to cache memory information and the possibility that the effect due to delay could favour the \newenc variants. Experimental results show that the access times obtained by the \newenc variants are comparable with the access times obtained by the WTs but significantly higher than those obtained by DACs, being almost an order of magnitude higher than those reported for DACs. However, it should be noticed that \newenc has to deal with delay and the fact that they offer a representation based on plain Huffman coding (as in the case of WTs), not designed for decoding speed. Adapting the \newenc to other more efficient representations, based on block decoding, would also significantly decrease the direct access time to the characters in the sequence. Beyond that, access times are often below a microsecond, and therefore they remain acceptable for any practical application.

\smallskip
The delay shown in the cases tested is very often below the threshold of a single character, and the value decreases very quickly for increasing $\lambda$ values, reaching in some cases the desirable limit of $0.0$.
As may be expected, the delay values offered by the $\gamma$-variant of \newenc are always below the values offered by the standard variant under the same conditions. However, this does not always translate into a higher speed in direct access.

The experimental results also show that the \newenc scheme suffers particularly in cases where the text contains portions in which the character frequency diverges significantly from the general frequency of the text. This is the case with the dataset \textsc{english}, which consists of the union of several English language texts with slightly different character frequencies and which contain portions in which infrequent characters are found in contiguous sequences (e.g., long sequences of characters written in capital letters). On the other hand, this negative case does not arise when decoding highly repetitive texts such as that of the dataset \textsc{bible}, for which access times and delay values are significantly lower.

This weakness in coding suggests the need to design \newenc variants that are able to overcome the points in the sequence where the number of pending bits increases significantly. Such solutions, the discussion of which goes beyond the scope of this work, can be diverse: ranging from splitting the whole sequence into sub-sequences, based on frequency variation, to be encoded separately; to the adoption of dynamic encoding techniques such as dynamic Huffman codes \cite{Vitter87}; to the introduction of special cases in which the encoding flow is temporarily interrupted to give the stack a chance to empty itself, thus reducing the expected decoding delay.

\subsection{Some Experimental Results on Text Processing}

In this section, we present some experimental results that allow us to evaluate the effectiveness of \newenc encoding in the field of text-processing. We  evaluate the performance of a string-matching algorithm adapted to operate directly on \newenc-encoded texts, comparing its performance with algorithms designed for application on standard representations. Specifically, we compare the \newenc adaptation of the Skip-Search algorithm (SFDC-SS) presented in Appendix A against the original solution implemented using $q$-grams, for $q\in\{1,2,4,6,8\}$.

The algorithms have been compared in terms of their running times, excluding  preprocessing times, on the 100MB data sets \textsc{dna}, \textsc{protein}, and \textsc{english}.

We used two different implementations of the \newenc-SS algorithm, both implemented using the standard variant of the encoding \newenc,  designed to work with $8$-bit blocks and with $16$-bit blocks, respectively. For the dataset \textsc{dna}, we tested the two algorithms using the value $\lambda=3$, while for the datasets \textsc{english} and \textsc{protein} we tested the algorithms using the value $\lambda=5$.

For completeness, we also included in our experimental comparison the Weak-Factor-Recognition (WFR) algorithm~\cite{CantoneFP19} implemented with $q$-grams, for $q\in\{1,2,4,6,8\}$, and the Backward-Range-Automaton Matcher (BRAM)~\cite{FaroS21}, implemented using $q$-grams characters, with $q\in\{1,2,4,6\}$. The WFR and the BRAM algorithms are among the most recent and effective solutions available in literature \cite{CantoneFP19,FaroS21} for the exact string matching problem working on standard representation.
In our experimental comparison, we did not include any solutions adapted to work with other direct-access coding schemes, as they are not designed for this type of task, resulting in excessively high execution times.

In the experimental evaluation, patterns of length $m$ have been randomly extracted from the sequences, with $m$ ranging over the set of values $\{2^i \mid 4\leqslant i \leqslant 10\}$.
For each experiment, the mean over the processing speed (expressed in GigaBytes per second) of $1000$ runs has been reported.

\begin{table}[!t]
\centering
\begin{tabular*}{0.95\textwidth}{@{\extracolsep{\fill}}|l|lllllll|}
\hline
\multicolumn{8}{c}{\textsc{dna dataset}}\\
\hline
&&&&&&&\\[-0.3cm]
~~$m$   &   16  &   32  &   64  &   128 &   256 &   512 &   1024~~~~    \\
&&&&&&&\\[-0.3cm]
\hline
&&&&&&&\\[-0.3cm]
~~SKIP-SEARCH~~ & 0.36 & 0.42 & 0.48 & 0.49 & 0.49 & 0.54 & 0.55\\
~~WFR & 0.32 & 0.39 & 0.43 & 0.45 & 0.49 & 0.55 & 0.57\\
~~BRAM & 0.31 & 0.39 & 0.46 & 0.46 & 0.51 & 0.56 & 0.56\\
&&&&&&&\\[-0.3cm]
\hline
&&&&&&&\\[-0.3cm]
~~SFDC-SS$(3,8)$ & \best{0.63} & \best{1.22} & 1.72 & 2.13 & 2.94 & 4.00 & 5.00\\
~~SFDC-SS$(3,16)$ & 0.06 & 0.81 & \best{1.79} & \best{3.70} & \best{5.00} & \best{7.69} & \best{11.11}\\
&&&&&&&\\[-0.3cm]
\hline
&&&&&&&\\[-0.3cm]
~~\emph{speed-up} & 1.77 & 2.90 & 3.71 & 7.59 & 9.85 & 13.77 & 19.33\\
\hline
\end{tabular*}\\[0.2cm]
\begin{tabular*}{0.95\textwidth}{@{\extracolsep{\fill}}|l|lllllll|}
\hline
\multicolumn{8}{c}{\textsc{protein dataset}}\\
\hline
&&&&&&&\\[-0.3cm]
~$m$    &   16  &   32  &   64  &   128 &   256 &   512 &   1024~~~~\\
&&&&&&&\\[-0.3cm]
\hline
&&&&&&&\\[-0.3cm]
~SKIP-SEARCH~~ & 0,40 & 0,46 & 0,51 & 0,51 & 0,53 & 0,55 & 0,56\\
~WFR & 0,34 & 0,42 & 0,48 & 0,49 & 0,51 & 0,53 & 0,52\\
~BRAM & 0,37 & 0,43 & 0,48 & 0,48 & 0,49 & 0,54 & 0,57\\
&&&&&&&\\[-0.3cm]
\hline
&&&&&&&\\[-0.3cm]
~SFDC-SS$(5,8)$ & \best{0,62} & \best{1,16} & 1,61 & 2,08 & 2,86 & 3,85 & 4,76\\
~SFDC-SS$(5,16)$ & 0,07 & 0,90 & \best{2,00} & \best{4,00} & \best{7,69} & \best{14,29} & \best{16,67}\\
&&&&&&&\\[-0.3cm]
\hline
&&&&&&&\\[-0.3cm]
~\emph{speed-up} & 1,54 & 2,52 & 3,96 & 7,80 & 14,62 & 26,14 & 29,33\\
\hline
\end{tabular*}\\[0.2cm]
\begin{tabular*}{0.95\textwidth}{@{\extracolsep{\fill}}|l|lllllll|}
\hline
\multicolumn{8}{c}{\textsc{english dataset}}\\
\hline
&&&&&&&\\[-0.3cm]
~~$m$   &   16  &   32  &   64  &   128 &   256 &   512 &   1024~~~~\\
&&&&&&&\\[-0.3cm]
\hline
&&&&&&&\\[-0.3cm]
~~SKIP-SEARCH~~& 0.44 & 0.47 & 0.53 & 0.51 & 0.53 & 0.56 & 0.63\\
~~WFR & 0.37 & 0.43 & 0.50 & 0.48 & 0.52 & 0.55 & 0.63\\
~~BRAM & 0.36 & 0.45 & 0.47 & 0.52 & 0.51 & 0.56 & 0.63\\
&&&&&&&\\[-0.3cm]
\hline
&&&&&&&\\[-0.3cm]
~~SFDC-SS$(5,8)$ & \best{0.55} & \best{0.95} & 1.35 & 1.75 & 2.50 & 3.23 & 4.00\\
~~SFDC-SS$(5,16)$ & 0.07 & 0.82 & \best{1.39} & \best{3.03} & \best{6.67} & \best{9.09} & \best{11.11}\\
&&&&&&&\\[-0.3cm]
\hline
&&&&&&&\\[-0.3cm]
~~\emph{speed-up} & 1.25 & 2.04 & 2.61 & 5.79 & 12.67 & 16.09 & 17.56\\
\hline
\end{tabular*}\\[0.2cm]
\caption{\label{tab:sm-results}Experimental results obtained by comparing \newenc-based string matching algorithms against 3 standard string matching algorithms. Results are expressed in Gigabytes per second. Best results have been bold-faced.}
\end{table}

Table \ref{tab:sm-results} presents the experimental results obtained  from our comparison. Since many of the algorithms compared were implemented and tested in different variants, for simplicity we only present the search speed offered by the best variant. Furthermore, the last row of each sub-table presents the speed-up offered by the new variants compared to the previous ones.

Although the new variants operate on codified texts, they perform best in each of the cases analysed. As is understandable, the variant operating on words of length 8 offers the best performance for short patterns ($\sigma\leq 32$), whereas, when the pattern length increases, the 16-bit variant obtains significantly better performance.

The possibility of processing the string in blind mode allows a significantly higher processing speed than that offered by solutions operating on plain texts
and allowing the \newenc-SS algorithm to achieve impressive speed-ups for very long patterns. Specifically, it is up to $11$ times faster than the BRAM algorithm in the case of \textsc{dna} and \textsc{english} dataset, and up to $29$ times faster than BRAM in the case of the \textsc{protein} dataset.



\section{Conclusions}
\label{sec:conclusions}

In this paper, we have introduced a data reorganisation technique that, when applied to variable-length codes, allows easy, direct, and fast access to any character of the encoded text, in time proportional to the length of its code-word, plus an additional overhead that is constant in many practical cases. The importance of this result is related to the need for random access to variable-length codes required by various applications, particularly in compressed data structures. Our method, besides being extremely simple to be translated into a computer program and efficient in terms of space and time, has the surprising feature of being computation-friendly. In fact, it turns out to be particularly suitable for applications in text processing problems. Although our evaluation was limited to the case of exact string matching, one of the basic problems in text processing, the applications which the techniques presented in this work can be applied to are many. 
We have also demonstrated experimentally that our coding model is comparable to the most efficient methods found in the literature, thus providing a proof of concept of the practical value of the idea.

Among the weaknesses of the new model, we have highlighted its particular sensitivity to the case in which the input text has a character frequency that is subject to considerable variation along its length. In almost all cases, the expected delay time may increase significantly. Our future work will aim to identify specific techniques that are able to eliminate, or cancel, this limitation.

\bibliographystyle{plain}

\bibliography{biblio}

\begin{thebibliography}{10}

\bibitem{BM77}
Robert~S. Boyer and J.~Strother Moore.
\newblock A fast string searching algorithm.
\newblock {\em Communications of the ACM}, 20(10):762--772, October 1977.

\bibitem{BLN09a}
Nieves~R. Brisaboa, Susana Ladra, and Gonzalo Navarro.
\newblock Directly addressable variable-length codes.
\newblock In Jussi Karlgren, Jorma Tarhio, and Heikki Hyyr{\"{o}}, editors,
  {\em String Processing and Information Retrieval, 16th International
  Symposium, {SPIRE} 2009, Saariselk{\"{a}}, Finland, August 25-27, 2009,
  Proceedings}, volume 5721 of {\em Lecture Notes in Computer Science}, pages
  122--130. Springer, 2009.

\bibitem{BLN13}
Nieves~R. Brisaboa, Susana Ladra, and Gonzalo Navarro.
\newblock Dacs: Bringing direct access to variable-length codes.
\newblock {\em Inf. Process. Manag.}, 49(1):392--404, 2013.

\bibitem{CantoneFP19}
Domenico Cantone, Simone Faro, and Arianna Pavone.
\newblock Linear and efficient string matching algorithms based on weak factor
  recognition.
\newblock {\em {ACM} J. Exp. Algorithmics}, 24(1):1.8:1--1.8:20, 2019.

\bibitem{CLP98}
C.~Charras, T.~Lecroq, and J.~D. Pehoushek.
\newblock A very fast string matching algorithm for small alphabets and long
  patterns.
\newblock In {\em CPM}, volume 1448 of {\em LNCS}, pages 55--64.
  Springer-Verlag, 1998.

\bibitem{CR94}
Maxime Crochemore and Wojciech Rytter.
\newblock {\em Text Algorithms}.
\newblock Oxford University Press, 1994.

\bibitem{Navarro98}
Edleno~Silva de~Moura, Gonzalo Navarro, Nivio Ziviani, and Ricardo~A.
  Baeza{-}Yates.
\newblock Direct pattern matching on compressed text.
\newblock In {\em String Processing and Information Retrieval: {A} South
  American Symposium, {SPIRE} 1998, Santa Cruz de la Sierra Bolivia, September
  9-11, 1998}, pages 90--95. {IEEE} Computer Society, 1998.

\bibitem{Elias74}
Peter Elias.
\newblock Efficient storage and retrieval by content and address of static
  files.
\newblock {\em J. {ACM}}, 21(2):246--260, 1974.

\bibitem{Faro16b}
Simone Faro.
\newblock A very fast string matching algorithm based on condensed alphabets.
\newblock In {\em Algorithmic Aspects in Information and Management - 11th
  International Conference, {AAIM} 2016, Bergamo, Italy, July 18-20, 2016,
  Proceedings}, volume 9778 of {\em Lecture Notes in Computer Science}, pages
  65--76. Springer, 2016.

\bibitem{FaroS21}
Simone Faro and Stefano Scafiti.
\newblock The range automaton: An efficient approach to text-searching.
\newblock In Thierry Lecroq and Svetlana Puzynina, editors, {\em Combinatorics
  on Words - 13th International Conference, {WORDS} 2021, Rouen, France,
  September 13-17, 2021, Proceedings}, volume 12847 of {\em Lecture Notes in
  Computer Science}, pages 91--103. Springer, 2021.

\bibitem{FV07}
Paolo Ferragina and Rossano Venturini.
\newblock A simple storage scheme for strings achieving entropy bounds.
\newblock {\em Theor. Comput. Sci.}, 372(1):115--121, 2007.

\bibitem{G13}
Pawel Gawrychowski.
\newblock Optimal pattern matching in {LZW} compressed strings.
\newblock {\em {ACM} Trans. Algorithms}, 9(3):25:1--25:17, 2013.

\bibitem{GGV03}
Roberto Grossi, Ankur Gupta, and Jeffrey~Scott Vitter.
\newblock High-order entropy-compressed text indexes.
\newblock In {\em Proceedings of the Fourteenth Annual {ACM-SIAM} Symposium on
  Discrete Algorithms, January 12-14, 2003, Baltimore, Maryland, {USA}}, pages
  841--850. {ACM/SIAM}, 2003.

\bibitem{huf52}
David~A. Huffman.
\newblock A method for the construction of minimum-redundancy codes.
\newblock {\em Proceedings of the Institute of Radio Engineers},
  40(9):1098--1101, September 1952.

\bibitem{KMP77}
Donald~E. Knuth, James H.~Morris Jr., and Vaughan~R. Pratt.
\newblock Fast pattern matching in strings.
\newblock {\em {SIAM} J. Comput.}, 6(2):323--350, 1977.

\bibitem{Lec07}
T.~Lecroq.
\newblock Fast exact string matching algorithms.
\newblock {\em Inf. Process. Lett.}, 102(6):229--235, 2007.

\bibitem{MN05}
Veli M\"{a}kinen and Gonzalo Navarro.
\newblock Succinct suffix arrays based on run-length encoding.
\newblock {\em Nordic J. of Computing}, 12(1):40–66, mar 2005.

\bibitem{MM93}
Udi Manber and Gene Myers.
\newblock Suffix arrays: A new method for on-line string searches.
\newblock {\em SIAM Journal on Computing}, 22(5):935--948, 1993.

\bibitem{MoffatS00}
Alistair Moffat and Lang Stuiver.
\newblock Binary interpolative coding for effective index compression.
\newblock {\em Inf. Retr.}, 3(1):25--47, 2000.

\bibitem{MT02}
Alistair Moffat and Andrew Turpin.
\newblock {\em Compression and Coding Algorithms}, volume 669 of {\em The
  international series in engineering and computer science}.
\newblock Kluwer, 2002.

\bibitem{Salomon07}
David Salomon.
\newblock {\em Variable-length Codes for Data Compression}.
\newblock Spring{\-}er-Ver{\-}lag, Berlin, Germany~/ Heidelberg, Germany~/
  London, UK~/ etc., 2007.

\bibitem{SK64}
Eugene~S. Schwartz and Bruce Kallick.
\newblock Generating a canonical prefix encoding.
\newblock {\em Commun. {ACM}}, 7(3):166--169, 1964.

\bibitem{SWM0021}
Xiuwen Sun, Di~Wu, Da~Mo, Jie Cui, and Hong Zhong.
\newblock Accelerating knuth-morris-pratt string matching over {LZ77}
  compressed text.
\newblock In Ali Bilgin, Michael~W. Marcellin, Joan Serra{-}Sagrist{\`{a}}, and
  James~A. Storer, editors, {\em 31st Data Compression Conference, {DCC} 2021,
  Snowbird, UT, USA, March 23-26, 2021}, page 372. {IEEE}, 2021.

\bibitem{Teuhola11}
Jukka Teuhola.
\newblock Interpolative coding of integer sequences supporting log-time random
  access.
\newblock {\em Inf. Process. Manag.}, 47(5):742--761, 2011.

\bibitem{Vitter87}
Jeffrey~Scott Vitter.
\newblock Design and analysis of dynamic huffman codes.
\newblock {\em J. ACM}, 34(4):825–845, oct 1987.

\bibitem{Williams1999}
Hugh~E. Williams and Justin Zobel.
\newblock Compressing integers for fast file access.
\newblock {\em Comput. J.}, 42:193--201, 1999.

\bibitem{Yao79}
Andrew~Chi{-}Chih Yao.
\newblock The complexity of pattern matching for a random string.
\newblock {\em {SIAM} J. Comput.}, 8(3):368--387, 1979.

\end{thebibliography}

\appendix
\newpage

\section*{APPENDIX A: String Matching over Encoded Sequences}
\label{sec:appx-string-matching}

The \emph{exact string matching problem} is a basic task in text processing. It consists in finding all the (possibly overlapping) occurrences of an input pattern $x$ of length $m$ within a text $y$ of length $n$, both strings over a common alphabet $\Sigma$ of size $\sigma$. More formally, the problem aims at finding all positions $j$ in $y[0 \,..\, n-m]$ such that $y[j \,..\, j+m-1] = x$. 

The problem can be solved in $\mathcal{O}(n)$ worst-case time complexity \cite{KMP77}. However, in many practical cases it is possible to avoid reading all the characters of the text, thus achieving sublinear performances on the average. The optimal average time complexity $\mathcal{O}(n\log_{\sigma}m/m)$ \cite{Yao79} has been reached for the first time by the Backward DAWG Matching algorithm \cite{CR94}.

In this section, we discuss the adaptation to the \newenc encoding of the Skip-Search algorithm~\cite{CLP98}, a solution that, despite its simplicity, has proven over time to be among the most efficient algorithms in the literature.

For each character $c$ of the alphabet, the Skip-Search algorithm collects in a bucket $z[c]$ all the positions of that character in the pattern $x$, so that for each $c \in \Sigma$ we have
$
z[c] = \{i\ :\  0 \leq i \leq m-1 \textrm{ and } x[i] = c\}.
$
Thus if a character occurs $k$ times in the pattern, there are $k$ corresponding positions in the bucket of the character. Plainly, the space and time complexity needed for the construction of the  array $z$ of buckets is $\mathcal{O}(m+\sigma)$.
Then the algorithm examines all the characters $y[j]$ in the text at positions $j=km-1$, for $k=1,2,\ldots, \lfloor n/m\rfloor$. For each such character $y[j]$, the bucket $z[y[j]]$ allows one to compute the possible positions $h$ of the text in the neighborhood of $j$ at which the pattern could occur.  
By performing a character-by-character comparison between $x$ and the substring $y[h \,..\, h+m-1]$ until either a mismatch is found or all the characters in the pattern $x$ have been considered, it can be tested whether $x$ actually occurs at position $h$ of the text.
The Skip-Search algorithm has a quadratic worst-case time complexity, however, as shown in~\cite{CLP98}, the expected number of text character inspections is $\mathcal{O}(n)$. In practical cases the Skip-Search algorithm performs better in the case of large alphabets, since most of the buckets in the array $z$ are empty.

\smallskip

Among the variants of the Skip-Search algorithm, the most relevant one for our purposes is the Alpha-Skip-Search algorithm~\cite{CLP98}, which collects buckets for substrings of the pattern rather than for single characters. We mention also the variant of the algorithm using $q$-grams~\cite{Faro16b}, which combines the algorithm with the $q$-gram fingerprinting technique adopted in~\cite{Lec07}.

\smallskip

The adaptation of the algorithm to the case in which the input strings are \newenc-encoded is named \newenc-Skip-Search (\newenc-SS for short); its pseudocode is reported in Fig.~\ref{fig:code-sfdc-ss}.
The algorithm \newenc-SS makes use of three basic auxiliary procedures that operate on bit strings. The first procedure, \textsc{getRBlock}$(S,i,q)$, extracts a block of $q$ bits from the input string $S$, starting from the $(i+1)$-st bit of the string, and returns it right-aligned in a word of size $w$: 
$$\textsc{get-RBlock}(S,i,q) = 0^{w-q}.\hat{S}[i..i+q-1].$$ 
A necessary condition for the procedure to be applied is that the value of $q$ is less than or equal to the word size $w$. 
The second procedure \textsc{get-LBLock}$(S,i,q)$ returns a block of $q$ bits, as in the previous case, but stores the block in the $q$ leftmost positions of a register of size $w$. Specifically, we have:
$$\textsc{get-LBlock}(S,i,q) = \textsc{get-RBlock}(S,i,q) \ll (w-q) =  \hat{S}[i..i+q-1].0^{w-q}$$ 
It is easy to observe that in both cases the computation requires constant time.

Let now $X$ and $Y$ be the \newenc-encodings of $x$ and $y$, respectively, implemented with $\lambda$ layers. The third procedure, \textsc{Verify}$(Y,X,m,m',\lambda,i)$ (see Fig.~\ref{fig:code-sfdc-ss}), performs a verification of the occurrence of a pattern $X$ of $m$ characters starting at the $(i+1)$-st position of the text $Y$. As discussed above, the verification is done in \emph{blind mode}, i.e., the blocks involved in the string encodings are merely compared without decoding the input strings. For the comparison of the last dynamic layer, we assume the presence in the pattern representation of the blind-match layer, $\widehat{X}_B$, of length $m'\geq m$ and whose bits are all set to $1$ except for the positions left idle in the encoding. Formally, we have $\textsc{Verify}(X,m,m',Y,\lambda, i) = \textsc{True}$ if and only if the following two conditions hold:
$$
 \begin{array}{ll}
        1. & \widehat{X}_h[0 \,..\, m] = \widehat{Y}_h[i \,..\, i+m-1], \textrm{ for } 0\leq h< \lambda-1, \textrm{ and}\\[0.2cm]
        2. & \widehat{X}_D[0 \,..\, m'] = \big(\widehat{Y}_D[i \,..\, i+m'-1] \,\&\,  \widehat{X}_B[0 \,..\, m']\big)
        \end{array}
$$
The procedure operates on the \newenc encoding of the two strings and compares the two bit sequences for each of the $\lambda$ levels of the representation.
If $w$ is the computer word size and $m$ is the pattern length, the procedure \textsc{Verify} has a time complexity equal to $\mathcal{O}(\lambda \lfloor m/w \rfloor)$, which can be simplified to $\mathcal{O}(m)$, since we can assume $\lambda \leq w$. Observe also that, if $\lambda m \leq w$, this procedure achieves constant time.

\begin{figure}[!t]
\begin{center}
\begin{tabular}{l|l}
\begin{tabular}{rl}
\multicolumn{2}{l}{\textsc{get-RBlock($S, i, q$)}}\\
~\textsf{1.} & \textsf{$\hat{B} \leftarrow 0^{w-q}1^q$}\\
~\textsf{2.} & \textsf{$M \leftarrow S[\lfloor i/w \rfloor]$}\\
~\textsf{3.} & \textsf{$s \leftarrow i \mod w$}\\
~\textsf{4.} & \textsf{if ($s \leq w-q$) then}\\
~\textsf{5.} & \qquad \textsf{$M \leftarrow M \gg w-q-s$}\\
~\textsf{6.} & \qquad \textsf{return $M\ \&\ B$}\\
~\textsf{7.} & \textsf{$M \leftarrow M \ll s-w+q$}\\
~\textsf{8.} & \textsf{$N \leftarrow S[\lfloor i/w \rfloor +1] \gg 2w-q-s$}\\
~\textsf{9.} & \textsf{return $(M | N)\ \&\ B$}\\
&\\
\multicolumn{2}{l}{\textsc{Pre-\newenc-Skip-Search($X,m,q$)}}\\
~\textsf{1.} & \textsf{for $C \in \{0,1\}^q$ do $st[C] \leftarrow \emptyset$}\\
~\textsf{3.} & \textsf{for $i \leftarrow 0$ to $m-q$ do}\\
~\textsf{4.} & \qquad \textsf{$C \leftarrow $ \textsc{get-RBlock}($X_0, i, q$)}\\
~\textsf{5.} & \qquad \textsf{$z[C] \leftarrow z[C] \cup \{m-q-i\}$}\\
~\textsf{6.} & \qquad \textsf{return $z$}\\
&\\
\end{tabular}~~~~~~ & ~~~~~~
\begin{tabular}{rl}
\multicolumn{2}{l}{\textsc{Verify($Y, X, m, m', \lambda, i$)}}\\
~\textsf{1.} & \textsf{for $h \leftarrow 0$ to $\lambda-2$ do}\\ 
~\textsf{2.} & \qquad \textsf{$j \leftarrow 0$}\\
~\textsf{3.} & \qquad \textsf{for $k \leftarrow 0$ to $\lfloor m/w \rfloor -1$ do}\\
~\textsf{4.} & \qquad \qquad \textsf{$s \leftarrow \min(m-j,w)$}\\
~\textsf{5.} & \qquad \qquad \textsf{$D \leftarrow$ \textsc{get-LBlock}($Y_h,i+j,s$)}\\
~\textsf{6.} & \qquad \qquad \textsf{if ($X_h[k] \neq D$) then return \textsc{False}}\\
~\textsf{7.} & \qquad \qquad \textsf{$j \leftarrow j+w$}\\
~\textsf{8.} & \textsf{$j \leftarrow 0$}\\
~\textsf{9.} & \textsf{for $k \leftarrow 0$ to $\lfloor m'/w \rfloor -1$ do}\\
~\textsf{10.} & \qquad \textsf{$s \leftarrow \min(m'-j,w)$}\\
~\textsf{11.} & \qquad \textsf{$D \leftarrow$ \textsc{get-LBlock}($Y_D,i+j,s$)}\\
~\textsf{12.} & \qquad \textsf{if ($X_D[k] \neq (D \wedge X_B[k]$) then return \textsc{False}}\\
~\textsf{13.} & \qquad \textsf{$j \leftarrow j+w$}\\
~\textsf{14.} & \textsf{return \textsc{True}}\\
&\\
&\\
\end{tabular}\\
\hline
\end{tabular}\\
\begin{tabular}{rl}
&\\
\multicolumn{2}{l}{\textsc{\newenc-Skip-Search($X,m,Y,n,q$)}}\\
~\textsf{1.} & \textsf{$z \leftarrow$ \textsc{Pre-\newenc-Skip-Search($X,m,q$)}}\\
~\textsf{2.} & \textsf{for $j \leftarrow m-q$ to $n-1$ step $m-q+1$  do}\\
~\textsf{3.} & \qquad \textsf{$C \leftarrow$ \textsc{get-RBlock}($Y_0, j, q$)}\\
~\textsf{4.} & \qquad \textsf{for $k \in z[C]$ do}\\
~\textsf{5.} & \qquad \qquad \textsf{if (\textsc{Verify($X,m,Y,j-k$)}) then}\\
~\textsf{6.} & \qquad \qquad \qquad \textsf{Output($j-k$)}\\
\end{tabular}\\
\caption{\label{fig:code-sfdc-ss}The pseudo-code of the  \newenc-Skip-Search algorithm}
\end{center}
\end{figure}

\smallskip

The \newenc-SS algorithm maintains a table of buckets, $z$, of size $2^q$ where, for each block of bits $C\in\{0,1\}^q$, the \newenc-SS algorithm collects in the bucket $z[C]$ all the positions where the block $C$ occurs in $\widehat{X}_0$. Specifically, for each $C \in \{0,1\}^q$ we have:
$$
z[C] = \big\{m-q-i\ :\  0 \leq i \leq m-q \textrm{ and } \widehat{X}_0[i..i+q-1] = C \big\}.
$$
The table $z$ is pre-computed by the auxiliary procedure \textsc{Pre-\newenc-Skip-Search} shown in Fig.~\ref{fig:code-sfdc-ss} (on the left). The time complexity required for the construction of such a  table is $\mathcal{O}(m+2^q)$.

The search phase of the \newenc-SS algorithm, depicted in Fig.~\ref{fig:code-sfdc-ss} (on the right), examines all the block of bits $\widehat{Y}_0[j \,..\, j+q-1]$ for $j$ starting at position $m-q$ and proceeding with steps of $m-q+1$ positions. As in the original algorithm, for each block $C=\widehat{Y}_0[j \,..\, j+q-1]$, the bucket $z[C]$ is explored to retrieve all the possible positions $j-z[C]$ where $\widehat{X}_0$ could occur in $\widehat{Y}_0$.  For each of such positions, the \textsc{Verify} procedure described above is called.

Since in the worst case the verification procedure can be called for every position in the text, the \newenc-SS algorithm has a $\mathcal{O}(n m\lambda / w)$ worst-case time complexity, which can be simplified to $\mathcal{O}(nm)$ if we assume $\lambda \leq w$, and reduces to $\mathcal{O}(n)$ for short patterns such that $ m \leq w/\lambda$.

\end{document}